\title{Practical Adversarial Combinatorial Bandit Algorithm via Compression of Decision Sets}
\author{
	Shinsaku Sakaue
	\\
	NTT Communication Science Laboratories\\
	\texttt{sakaue.shinsaku@lab.ntt.co.jp} \\
	\and
	Masakazu Ishihata\\
	Hokkaido University\\	\texttt{ishihata.masakazu@ist.hokudai.ac.jp}
    \and
    Shin-ichi Minato\\
    Hokkaido University\\	\texttt{minato@ist.hokudai.ac.jp}
}
\newtheorem{thm}{Theorem}
\newtheorem{lem}{Lemma}
\newcommand{\zdd}[1]{\mathbf{G}_{#1}}
\newcommand{\lbl}[1]{l_{#1}}
\newcommand{\child}[2]{\mathrm{c}_{#1}^{#2}}
\newcommand{\arc}[2]{a_{#1}^{#2}}
\newcommand{\Rt}[2]{\mathcal{R}_{#1,#2}}
\newcommand{\F}[1]{\mathrm{F}_{#1}}
\newcommand{\B}[1]{\mathrm{B}_{#1}}
\newcommand{\BC}[2]{\mathrm{C}_{#1,#2}}
\newcommand{\CP}[2]{\mathrm{P}_{#1,#2}}
\newcommand{\Ber}[1]{\mathrm{Ber}\!\left(#1\right)}
\def\xb{\mbox{\boldmath $x$}}
\def\zb{\mbox{\boldmath $z$}}
\def\Fl{\mathcal{F}}
\newcommand{\ffcont}[1]{f}
\def\st{$s$-$t$\ }
\newcommand{\cost}[1]{C(P)}
\newcommand{\var}[1]{\text{Var}[{#1}]}
\def\zarc{$0$-arc}
\def\oarc{$1$-arc}
\def\mab{MAB}
\def\cmab{CMAB}
\def\bdx{L}
\newcommand{\vart}[1]{\text{Var}_t[{#1}]}
\def\tr{\text{Tr}}
\def\combexp{{C{\scriptsize OMB}EXP}}
\def\comband{{C{\scriptsize OM}B{\scriptsize AND}}}
\def\combandms{{C{\scriptsize OMB}WM}}
\def\lb{\mbox{\boldmath $\ell$}}
\def\ib{\mbox{\boldmath $1$}}
\def\wb{\mbox{\boldmath $w$}}
\def\wt{{\widetilde w}}
\def\wbt{\mbox{{\boldmath $\widetilde w$}}}
\def\xbt{\mbox{{\boldmath $\widetilde x$}}}
\def\xbb{\mbox{{\boldmath $\bar x$}}}
\def\lbh{\mbox{{\boldmath $\hat \ell$}}}
\def\lh{{\hat \ell}}
\def\Xt{{X_t}}
\def\Fl{\mathcal{S}}
\newcommand{\weight}[2]{w_{{#1},{#2}}}
\newcommand{\tlweight}[2]{\widetilde{w}_{{#1},{#2}}}
\newcommand{\loss}[2]{\ell_{{#1},{#2}}}
\newcommand{\losst}[1]{\ell_{t,#1}}
\newcommand{\hloss}[2]{\lh_{{#1},{#2}}}
\def\a{\alpha}
\def\b{\beta}
\def\norm#1{\|#1\|}
\def\R{\mathbb{R}}
\def\N{\mathbb{N}}
\def\E{\mathbb{E}}
\newcounter{counter}
\let\oldref\ref
\def\ref#1{{\normalfont\oldref{#1}}}
\def\eqref#1{{\normalfont(\oldref{#1})}}
\active\gdef@{\mkern1mu}}
\mathchardef\Gamma="7100 \mathchardef\Delta="7101
\mathchardef\Theta="7102 \mathchardef\Lambda="7103
\mathchardef\Psi="7104 \mathchardef\Pi="7105 \mathchardef\Sigma="7106
\mathchardef\Upsilon="7107 \mathchardef\Phi="7108
\mathchardef\Psi="7109 \mathchardef\Omega="710A
\newif\iffigure
\begin{document}
\maketitle

\begin{abstract}

We consider the adversarial combinatorial multi-armed bandit (CMAB) problem, whose decision set can be exponentially large with respect to the number of given arms. To avoid dealing with such large decision sets directly, we propose an algorithm performed on
a {\it zero-suppressed binary decision diagram} (ZDD), which is a compressed representation of the decision set. The proposed algorithm achieves either $O(T^{2/3})$ regret with high probability or $O(\sqrt{T})$ expected regret as the {\it any-time guarantee}, where $T$ is the number of past rounds. Typically, our algorithm works efficiently for CMAB problems defined on networks. Experimental results show that our algorithm is applicable to various large adversarial CMAB instances including adaptive routing problems on real-world networks.
\end{abstract}

\section{Introduction}\label{section:introduction}

The multi-armed bandit (\mab) problem~\cite{robbins1985some} has been extensively studied as a fundamental framework for online optimization problems with partial observations.
In the \mab, a player chooses an arm (choice) from a set of possible arms.
Then, the player incurs a cost and obtains feedback according to the selected arm.
The aim of the player is to minimize the cumulative cost by exploring possible arms and exploiting those with low costs.
There have been many studies on \mab\ applications, e.g., clinical trials~\cite{kuleshov2014algorithms}, 
and recommendation systems~\cite{li2010contextual}.

\newcommand{\osp}{OSP}
\newcommand{\dst}{DST}
\newcommand{\cg}{CG}
In many real-world problems, each possible choice that the player can make is not expressed as a single arm but as a {\it super arm}, which is a set of arms that satisfies certain combinatorial constraints; 
a full set of super arms is called a {\it decision set}.
This problem is called the {\it combinatorial multi-armed bandit} (\cmab) problem, 
and 
the \cmab\ is said to be {\it adversarial} if the cost of each arm is arbitrarily changed by an adversary. 
Examples of the adversarial \cmab\ 
include various important problems on networks such as 
the {\it online shortest path} (\osp) problem~\cite{awerbuch2004adaptive,gyorgy2007online}, the {\it dynamic Steiner tree} (\dst) problem~\cite{imase1991dynamic}, and the {\it congestion game} (\cg)~\cite{rosenthal1973games}; 
although 
the original \cg\ is a resource allocation problem over multiple players, it can be formulated as an adversarial \cmab\ if a player considers the other players to be adversaries.
For instance, 
in the \osp\ on a traffic network, an arm corresponds to an edge (road) of a given network,  
a super arm is an $s$-$t$ path that connects the current point $s$ and the destination $t$,  
and the decision set is a set of all $s$-$t$ paths. 
Furthermore, in the \osp, 
the cost of an arm (road) represents the traveling time on the road, and it dynamically changes due to the time-varying amount of traffic or accidents (e.g., cyber attacks in the case of the \osp\ on communication networks). In this paper, we focus on the adversarial \cmab. 
The main difficulty with the adversarial \cmab\ is that the size of the decision set is generally exponential in the number of arms.
To handle huge decision sets, existing methods for this problem assume that the decision set has certain properties. 
One such method is 
\combexp~\cite{combes2015combinatorial}, which 
can cope with the difficulty if the decision set consists of, 
for example, 
sets of arms satisfying a size constraint,  
or matchings on a given network.
However, it has been hard to design practical algorithms for adversarial \cmab\ instances with complex decision sets defined on networks; 
for example, the \osp, \dst, and \cg\ on undirected networks.

In this paper, we develop a practical and theoretically guaranteed algorithm for the adversarial \cmab, 
which is particularly effective for 
network-based adversarial \cmab\ 
instances.
We first propose \combandms\ (\comband~\cite{cesa2012comband} with Weight Modification), which is theoretically 
guaranteed to achieve either $O(T^{2/3})$ {\it regret} with high probability or $O(\sqrt{T})$ {\it expected regret}, where $T$ is the number of rounds. 
The above bounds are {\it any-time guarantees}~\cite{braun2016efficient}, and we can choose which regret value 
\combandms\ actually achieves by setting its hyper parameter at an appropriate value.
We then show that our \combandms\ can be performed on a {\it compressed} decision set;
we assume that a decision set is given as a {\it zero-suppressed decision diagram} (ZDD)~\cite{minato1993zero}, 
which is a compact graph representation of a family of sets. 
The time and space complexities of \combandms\ with a ZDD 
are linear in the size of the ZDD,  
whereas those of the naive \combandms\ is proportional to the size of a decision set.  
It is known that a ZDD tends to be small if 
it represents a set of subnetworks such as $s$-$t$ paths  or Steiner trees~\cite{kawahara2014frontier}.  
Thus our algorithm is effective for network-based adversarial \cmab\ instances including the \osp, \dst, and 
\cg. 
Experimental results on \osp, \dst, and \cg\ instances show that our algorithm is more scalable than naive algorithms that directly deal with decision sets. 
%
To the best of our knowledge, this is the first work to implement algorithms for the adversarial \cmab\ and provide experimental results, thus revealing the practical usefulness of adversarial \cmab\ algorithms.

\section{Related work}
Many studies have considered the adversarial \cmab\ with specific decision sets, e.g., $m$-sets~\cite{kale2010slate} and permutations~\cite{ailon2014permutation}.
In particular, the \osp, which is a \cmab\ problem on a network with an \st path constraint, has been extensively studied~\cite{awerbuch2004adaptive,gyorgy2007online} due to its practical importance. 
Whereas the previous studies have focused on the \osp\ on directed networks, 
our algorithm is also applicable to the \osp\ on undirected networks. 

The adversarial \cmab\ with general decision sets has been also extensively studied in~\cite{audibert2014regret,braun2016efficient,bartlett2008high,cesa2012towards,cesa2012comband,combes2015combinatorial,uchiya2010adversarial}.
One of the best known algorithms for this problem is \comband~\cite{cesa2012comband}, which has been proved to achieve $O(\sqrt{T})$	expected regret.
Recently the algorithm has been also proved to achieve $O(T^{2/3})$ {\it regret} with high probability 	in~\cite{braun2016efficient};\footnote{
The proof 
seems to include some mistakes. 
However, 
their techniques for the proof are still useful, 
and so we prove the $O(T^{2/3})$ high-probability regret bound of our algorithm by partially modifying their proof;
the modified parts are the description of the algorithm and Lemma~\ref{lem:highprob1} in the supplementary materials.
}
more precisely, the regret of \comband\ is bounded by $O(t^{2/3})$ with high probability in any $t$-th round $(t=1,\dots,T)$, which is called an any-time guarantee.
Although \comband\ has the strong theoretical results, its time complexity generally depends on the size of decision sets, which can be prohibitively large in practice.
To avoid such expensive computation, \combexp~\cite{combes2015combinatorial} scales up \comband\ by employing a projection onto the convex hull of the decision set via KL-divergence.
For some decision sets for which the projection can be done efficiently (e.g., $m$-sets or a set of matchings), \combexp\ 
runs faster than \comband, achieving the same theoretical guarantees.
However, it is difficult to perform the projection for other decision sets (e.g., \st paths or Steiner trees); actually it is NP-hard to do the projection in the case of the \osp\ and \dst\ on undirected networks. 

On the other hand, thanks to recent advances in constructing {\it decision diagrams} (DDs),     optimization techniques using DDs have been attracting much attention~\cite{bergman2016decision,coudert1997solving,morrison2016solving}.
Those techniques are advantageous in that DDs can efficiently store all solutions satisfying some complex constraints; for example, constraints that are hard to represent as a set of inequalities.
The ZDD~\cite{minato1993zero}, which we use in our algorithm, is a kind of DD that is known to 
be suitable for storing specific network substructures (e.g., \st paths or Steiner trees). 
Thus our algorithm with ZDDs runs fast in many \cmab\ instances defined on networks, including the \osp, \dst, and \cg.

\section{Adversarial \cmab}\label{section:background} 


\newcommand{\regret}[1]{\mathrm{R}_{{#1}}}
\newcommand{\eregret}[1]{\overline{\mathrm{R}}_{{#1}}}
\newcommand{\Xsub}[1]{{X_{{#1}}}}
\newcommand{\lsub}[1]{{{\text{\boldmath{$\ell$}}}_{{#1}}}}
\newcommand{\csub}[1]{{c_{{#1}}}}


We here define the adversarial \cmab, which is a sequential decision problem consisting of $T$ rounds.
Let $[m] := \{1,\dots,m\}$ for any $m \in \N$.
We use $E = [d]$ to denote a set of arms and also use $\Fl \subseteq 2^E$ to denote a {decision set}, where $X \in \Fl$ is a {super arm}.
At each $t$-th round ($t\in[T]$), 
an {adversary} secretly defines a {\it loss vector} $\lb_t:=(\losst{1},\dots,\losst{i})^\top\in\R^d$
and
a {player} chooses a super arm $X_t\in\Fl$.
Then, the player incurs and observes the cost $c_t=\lb^\top_t\ib_\Xt$, 
where $\ib_\Xt \in \{0, 1\}^d$ is an indicator vector such that its $i$-th element is $1$ if $i \in \Xt$ and $0$ otherwise.
Note that the player {\it cannot} observe $\lb_t$.
The aim of the player is to minimize the {regret} $\regret{T}$ defined as follows:
\begin{align*}
\regret{T}:=
\sum_{t=1}^T
\lb_t^\top\ib_\Xt - \min_{X\in\Fl}\sum_{t=1}^T\lb_t^\top\ib_X.
\end{align*}
The first term is the cumulative cost and the second term is the total cost of the best single super arm selected with hindsight. 
Namely, $\regret{T}$ expresses the extra cost that the player incurs against the best single super arm.\footnote{
If the adversary behaves adaptively, 
the above interpretation of the regret is somewhat inappropriate; 
in such cases using the {\it policy regret}~\cite{arora2012online} 
is considered to be more suitable. 
However, we here focus on  
the above regret and expected regret, 
leaving an analysis based on the policy regret for future work.}
As is customary,
we assume $\max_{t\in[T], X\in\Fl}|\lb^\top_t\ib_X|\le 1$.

If the adversary and/or the player choose $\lsub{t}$ and $\Xsub{t}$ in a stochastic manner, then $\regret{t}$ is a random variable of a joint distribution $p(\lsub{1:t}, \Xsub{1:t})$,
where $\Xsub{1:t} = \{X_1,\dots,X_t\}$ and $\lsub{1:t} = \{\lb_1,\dots,\lb_t\}$.
In the adversarial \cmab, $p$ is assumed to satisfy the following conditional independence: 
$p(\lsub{1:T},\Xsub{1:T}) = \prod_{t \in [T]} p(\Xsub{t} \mid \lsub{1:t-1}, \Xsub{1:t-1})p(\lsub{t} \mid \lsub{1:t-1},\Xsub{1:t-1})$, where $\Xsub{1:0} = \lsub{1:0} = \{\}$. 
$p(\lsub{t} \mid \Xsub{1:t-1}, \lsub{1:t-1})$ corresponds to the adversary's strategy and $p(\Xsub{t} \mid \Xsub{1:t-1}, \lsub{1:t-1})$ corresponds to the player's strategy.
Since the player cannot directly observe $\lsub{1:t}$, the player's strategy must satisfy $p(\Xsub{t} \mid \Xsub{1:t-1}, \lsub{1:t-1}) = p(\Xsub{t} \mid \Xsub{1:t-1}, \csub{1:t-1})$.
Using the joint distribution $p$, the expected regret $\eregret{T}$ is defined as follows:
\[
\eregret{T}:=\E_{\lsub{1:T},\Xsub{1:T}\sim p} [\regret{T}]. 
\]
The objective of the adversarial \cmab\ is to design the player's strategy
$p(\Xsub{t} \mid \Xsub{1:t-1}, \csub{1:t-1})$ so that it minimizes $\regret{T}$ or $\eregret{T}$.
In this paper, we use $p_t(X_t)$ as shorthand for $p(\Xsub{t} \mid \Xsub{1:t-1}, \csub{1:t-1})$.


\section{Proposed algorithm for adversarial \cmab}
\label{section:algorithm}

We here propose \combandms\ (\comband\ with Weight  Modification), which is an algorithm for designing the player's strategy $p_t(\Xsub{t})$ with strong theoretical guarantees as described later. 
Algorithm~\ref{alg:comband} gives the details of \combandms. 
In what follows, 
we define $\bdx:=\max_{X\in\Fl} \norm{\ib_X}$ for any given $\Fl\subseteq 2^E$, where $\norm{\cdot}$ is the Euclidian norm.  
We also define $\lambda$ as the smallest non-zero eigenvalue of $\E_{X\sim u}[\ib_X\ib_X^\top]$, 
where $u$ is the uniform distribution over $\Fl$.  

	\begin{algorithm}
	\caption{\combandms$(\alpha, \Fl)$}\label{alg:comband}
	\begin{algorithmic}[1]
		\STATE $\tlweight{1}{i}\gets1$ ($i\in E$)
		\FOR{$t=1,\dots,T$}
		\STATE $\gamma_t\gets \frac{t^{-1/\a}}{2}$, $\eta_t\gets\frac{\lambda t^{-1/\a}}{2\bdx^2}$, $\eta_{t+1}\gets\frac{\lambda (t+1)^{-1/\a}}{2\bdx^2}$
		\STATE $X_t \sim p_t$
		\STATE $c_t\gets\lb_t^\top\ib_{\Xt}$ ($\lb_t$ is unobservable)
		\STATE $P_t(i,j)\gets \sum_{X\in\Fl : i,j\in X} p_t(X)$ ($i,j\in[d])$
		\STATE $\lbh_t\gets c_tP_t^+\ib_{\Xt}$
		\STATE $\tlweight{t+1}{i}\gets \tlweight{t}{i}^{\eta_{t+1}/\eta_{t}}
		\exp\big(-\eta_{t+1}\hloss{t}{i}\big)$ ($i\in E$)
		\ENDFOR
		\RETURN $\{X_t \mid t \in [T]\}$
	\end{algorithmic}
	\end{algorithm}


Given an arbitrary non-negative vector $\wb=(w_1,\dots,w_d)^\top \in \R^d$ and a decision set $\Fl \subseteq 2^{E}$, we define the {\it constrained distribution} $p(X ; \wb, \Fl)$ over $\Fl$ as follows:
\begin{align}
p(X ; \wb, \Fl) &:= \frac{w(X)}{Z(\wb, \Fl)},
&
Z(\wb, \Fl) &:= \sum_{X \in \Fl} w(X),
&
w(X) &:= \prod_{i \in X} w_i.
\label{eq:constrained_distribution}
\end{align}
Using the above,
we define the player's strategy $p_t(\Xsub{t})$, 
which appears in Step~4, as follows:
\begin{align}
p_t(\Xsub{t}):=(1-\gamma_t)p(\Xsub{t} ; \wbt_t, \Fl)+\gamma_t p(\Xsub{t} ; \ib_E, \Fl), \label{def:dists}
\end{align}
where $\wbt_t=(\tlweight{t}{1},\dots,\tlweight{t}{d})^\top$ is the weight vector defined in Step~8, and $\gamma_t$ is the parameter defined in Step~3; we  note that $p(\Xsub{t};\ib_E,\Fl)$ is the uniform distribution over $\Fl$. 
Thus $p_t$ is a mixture of two constrained distributions with the mixture rate $\gamma_t$.

Given a distribution $p$ over $\Fl$, 
a matrix $P$ is called a 
{\it co-occurrence probability matrix} (CPM) 
if its $(i,j)$ entry $P(i,j)$ is given by the 
{\it co-occurence probability} $p(i\in X,j\in X):=\sum_{X\in\Fl:i,j\in X} p(X)$. 
The matrix $P_t$ computed in Step~6 is the CPM of $p_t$, 
and $P_t^+$ used in Step~7 is the pseudo-inverse of $P_t$.
From Eq.~\eqref{def:dists}, the following equation holds:
\begin{equation}\label{eq:cmppt}
P_t(i,j)=(1-\gamma_t)p(i\in X, j\in X ; \wbt_t, \Fl)+\gamma_t p(i\in X, j\in X ; \ib_E, \Fl).
\end{equation}
%

The above \combandms\ is based on 
\comband~\cite{cesa2012comband};
if we replace Step~8 of \combandms\ with $\tlweight{t+1}{i}\gets\tlweight{t}{i}\exp(-\eta_t\hloss{t}{i})$, \combandms\ corresponds perfectly to the original \comband. 
Hence the one and only one difference is the weight modification in Step~8. 
However, introducing this weight modification gives us the following theoretical guarantees 
(for proofs, see the supplementary materials): 
%
\begin{thm}\label{thm:highprob}
For any $\Fl$, \combandms$(\alpha=3, \Fl)$ achieves $\regret{T}\le O\Big(\Big(\frac{d\lambda}{\bdx^2} + \sqrt{\frac{\bdx^2}{ \lambda} \ln\frac{|\Fl|+2}{\delta} }\Big)T^{2/3}\Big)$
with probability at least $1-\delta$.
\end{thm}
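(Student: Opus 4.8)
The plan is to follow the standard analysis of exponential-weights-type algorithms for combinatorial bandits, adapted to the weight-modification step. First I would set up the usual decomposition of the regret into an ``estimation'' part and a ``bias'' part: comparing $\sum_t \lb_t^\top \ib_{X_t}$ to $\sum_t \hloss{t}{}^\top \ib_{X_t}$ and $\sum_t \hloss{t}{}^\top \ib_X$ to $\sum_t \lb_t^\top \ib_X$ using that $\lbh_t = c_t P_t^+ \ib_{X_t}$ is (conditionally) an almost-unbiased estimator of $\lb_t$ on the relevant support, together with the $\gamma_t$-mixing term that costs $O(\sum_t \gamma_t)$. Because the high-probability bound is not over the expectation, I would replace the ``expectation is unbiased'' argument with a concentration argument: control $\sum_t (\hloss{t}{} - \lb_t)^\top \ib_{X_t}$ and $\sum_t (\hloss{t}{} - \lb_t)^\top \ib_X$ by martingale/Freedman-type inequalities, which is exactly the role of Lemma~\ref{lem:highprob1} mentioned in the excerpt; the per-step magnitude of $\hloss{t}{}^\top \ib$ is controlled by $\bdx^2/\lambda$ times the smallest eigenvalue bound on $P_t$ coming from $\gamma_t$ and the definition of $\lambda$.

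Next I would handle the core ``weighted-majority'' potential argument for the modified update in Step~8, $\tlweight{t+1}{i} = \tlweight{t}{i}^{\eta_{t+1}/\eta_t} \exp(-\eta_{t+1}\hloss{t}{i})$. The trick is that raising the old weight to the power $\eta_{t+1}/\eta_t$ makes the analysis telescope with a time-varying learning rate: defining the potential $\Phi_t := \frac{1}{\eta_t}\ln Z(\wbt_t,\Fl)$ (or the analogous log-partition relative to a comparator), the modification is precisely what is needed so that $\Phi_{t+1}-\Phi_t$ bounds a single-round loss difference plus a second-order term, while the change in $1/\eta_t$ contributes a controlled extra term; summing over $t$ and using $\eta_t = \Theta(t^{-1/3})$, $\gamma_t = \Theta(t^{-1/3})$ yields $\sum_t \gamma_t = O(T^{2/3})$ and $\sum_t \eta_t \cdot (\text{second-order terms}) = O(T^{2/3})$ as well. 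Here I would import the fact from~\cite{braun2016efficient} that the second-order term in the combinatorial exponential-weights step is bounded using the CPM structure and $\bdx$, i.e., $\E_{X\sim p}[(\hloss{t}{}^\top \ib_X)^2] \lesssim \bdx^2 \hloss{t}{}^\top P_t \hloss{t}{}$-type estimates, which collapse nicely because $P_t^+ P_t$ acts as a projection.

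I would then combine the three ingredients: (i) the mixing cost $\sum_t \gamma_t = O(T^{2/3})$, giving the $\sqrt{\bdx^2/\lambda}$ factor from the per-step deviation scale; (ii) the potential/regret bound from the modified exponential weights, which produces the $\frac{d\lambda}{\bdx^2}$ term via $\frac{\ln|\Fl|}{\eta_T}$-type and $\sum_t \eta_t \bdx^2$-type contributions (the $d$ appears because $\tr(P_t^+ P_t) \le d$ or because $\bdx^2 \le d$ bounds); and (iii) the high-probability deviation terms, which introduce the $\ln\frac{|\Fl|+2}{\delta}$ factor and hold on an event of probability at least $1-\delta$ by a union bound over the concentration events. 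Balancing the $t^{-1/\alpha}$ exponents with $\alpha = 3$ is what makes all pieces scale as $T^{2/3}$.

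The main obstacle I expect is the high-probability concentration step (Lemma~\ref{lem:highprob1}): the estimator $\lbh_t$ has variance that can be as large as $\Theta(\bdx^2/(\gamma_t \lambda))$, so naive Bernstein/Freedman bounds on $\sum_t (\hloss{t}{} - \lb_t)^\top \ib_{X}$ blow up unless one carefully uses that the conditional second moment is $\Theta(\bdx^2/\lambda)$ (the $\gamma_t^{-1}$ in the worst-case bound is tamed by the telescoping structure), and that the increments are bounded appropriately so Freedman's inequality applies; getting the constants and the $1/\delta$ dependence right here, while matching the flawed argument of~\cite{braun2016efficient} after the stated correction, is the delicate part. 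Everything else—the mixing bound, the modified potential telescoping, and the final parameter balancing—is routine once Lemma~\ref{lem:highprob1} is in hand.
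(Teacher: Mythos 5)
Your proposal follows essentially the same route as the paper: the three-term decomposition around $\lbh_t$ and the exponential-weights mean, the potential argument in which the $\eta_{t+1}/\eta_t$ power in Step~8 makes $\frac{1}{\eta_t}\ln W_{t+1}$ telescope (the paper implements this via H\"older's inequality), martingale concentration (the paper uses Azuma--Hoeffding and Bennett's inequality, i.e.\ the Freedman-type bounds you invoke) with a union bound over the $|\Fl|+2$ events, and the $\alpha=3$ balancing that makes every piece scale as $T^{2/3}$. Your identification of the delicate point---controlling the $\Theta(\bdx^2/(\gamma_t\lambda))$-scale increments of $\lbh_t$---matches where the paper's Lemmas~\ref{lem:highprob1}--\ref{lem:highprob3} do the work, so no substantive divergence or gap.
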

\begin{thm}\label{thm:expectation}
For any $\Fl$, \combandms$(\alpha = 2, \Fl)$ achieves
$\eregret{T}\le
O\Big(
\Big(
\frac{d\lambda}{\bdx^2}
+
\frac{\bdx^2\ln |\Fl|}{\lambda}
\Big)
\sqrt{T}\Big)$.
\end{thm}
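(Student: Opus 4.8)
\textbf{Proof plan for Theorem~\ref{thm:expectation}.}
The plan is to recognize that the weight update in Step~8 makes \combandms\ an exponential-weights (Hedge) procedure over the combinatorial set $\Fl$ with a \emph{decreasing} learning rate, and then to combine the standard anytime Hedge analysis with the second-moment control specific to \comband-type loss estimators. First I would unroll Step~8: since $\ln\tlweight{t+1}{i}=\frac{\eta_{t+1}}{\eta_t}\ln\tlweight{t}{i}-\eta_{t+1}\hloss{t}{i}$, dividing by $\eta_{t+1}$ and telescoping from $\tlweight{1}{i}=1$ gives $\tlweight{t}{i}=\exp\!\big(-\eta_t\widehat{L}_{t-1,i}\big)$ with $\widehat{L}_{t-1}:=\sum_{s=1}^{t-1}\lbh_s$. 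Hence $p(X;\wbt_t,\Fl)=\exp(-\eta_t\widehat{L}_{t-1}^\top\ib_X)/\sum_{Y\in\Fl}\exp(-\eta_t\widehat{L}_{t-1}^\top\ib_Y)=:q_t(X)$ is the Hedge distribution over $\Fl$ with rate $\eta_t$ on the cumulative estimated loss, and $p_t=(1-\gamma_t)q_t+\gamma_t u$ with $u$ uniform on $\Fl$. (This is exactly what the weight modification buys: a clean multiplicative-weights form although $\eta_t$ varies with $t$, which the anytime bound requires.)

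Next I would establish unbiasedness and reduce the problem. Writing $\mathcal{H}_{t-1}$ for the history before round $t$, note $\lbh_t=P_t^+\ib_{\Xt}\ib_{\Xt}^\top\lb_t$, so $\E[\lbh_t\mid\mathcal{H}_{t-1}]=P_t^+P_t\lb_t$. Because $\gamma_t u$ makes $p_t$ positive on all of $\Fl$, $\operatorname{range}(P_t)=\operatorname{span}\{\ib_X:X\in\Fl\}=\operatorname{range}(\E_{X\sim u}[\ib_X\ib_X^\top])$, whence $P_t^+P_t\ib_X=\ib_X$ and $\E[\ib_X^\top\lbh_t\mid\mathcal{H}_{t-1}]=\ib_X^\top\lb_t$ for every $X\in\Fl$. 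Moreover $\E[c_t\mid\mathcal{H}_{t-1}]=\sum_Xp_t(X)\ib_X^\top\lb_t\le\sum_Xq_t(X)\ib_X^\top\lb_t+2\gamma_t$ since costs lie in $[-1,1]$; as $2\sum_t\gamma_t=\sum_tt^{-1/2}=O(\sqrt T)$, the theorem reduces to bounding $\E\big[\sum_t\sum_Xq_t(X)\ib_X^\top\lb_t-\sum_t\ib_{X^*}^\top\lb_t\big]$, where $X^*$ is the best fixed super arm.

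For that I would run the anytime Hedge argument on the potential $\Phi_t:=\frac{1}{\eta_t}\ln\!\big(\frac1{|\Fl|}\sum_{X\in\Fl}\exp(-\eta_t\widehat{L}_{t-1}^\top\ib_X)\big)$, which satisfies $\Phi_1=0$ and $\Phi_{T+1}\ge-\sum_t\ib_{X^*}^\top\lbh_t-\frac{\ln|\Fl|}{\eta_{T+1}}$. Writing $\Phi_{t+1}-\Phi_t$ as a loss-update step (adding $\lbh_t$ at fixed rate $\eta_t$) plus a rate-change step ($\eta_t\!\to\!\eta_{t+1}$ at fixed cumulative loss), the rate-change step is $\le0$ because $\beta\mapsto\frac1\beta\ln(\frac1n\sum_ie^{-\beta x_i})$ is nondecreasing (monotonicity of power means, $(\frac1n\sum_ia_i^\beta)^{1/\beta}$ nondecreasing in $\beta>0$) and $\eta_{t+1}\le\eta_t$. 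The loss-update step equals $\frac1{\eta_t}\ln\!\big(\sum_Xq_t(X)e^{-\eta_t\ib_X^\top\lbh_t}\big)$, and here the key estimate is the range bound $\eta_t|\ib_X^\top\lbh_t|\le1$: since $P_t\succeq\gamma_t\E_{X\sim u}[\ib_X\ib_X^\top]$ on their common range, the largest eigenvalue of $P_t^+$ is at most $(\gamma_t\lambda)^{-1}$, so $|\ib_X^\top\lbh_t|=|c_t|\,|\ib_{\Xt}^\top P_t^+\ib_X|\le\bdx^2/(\gamma_t\lambda)$, and the choices $\gamma_t=\frac{t^{-1/2}}2,\ \eta_t=\frac{\lambda t^{-1/2}}{2\bdx^2}$ make $\eta_t\bdx^2/(\gamma_t\lambda)=1$. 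Then $e^{-x}\le1-x+x^2$ for $|x|\le1$ and $\ln(1+z)\le z$ bound the loss-update step by $-\sum_Xq_t(X)\ib_X^\top\lbh_t+\eta_t\sum_Xq_t(X)(\ib_X^\top\lbh_t)^2$; summing over $t$ and using $\Phi_1=0$ and the $\Phi_{T+1}$ bound gives, pathwise and for every $X\in\Fl$,
\[
\sum_{t=1}^T\sum_{Y\in\Fl}q_t(Y)\ib_Y^\top\lbh_t-\sum_{t=1}^T\ib_X^\top\lbh_t\ \le\ \frac{\ln|\Fl|}{\eta_{T+1}}+\sum_{t=1}^T\eta_t\sum_{Y\in\Fl}q_t(Y)(\ib_Y^\top\lbh_t)^2 .
\]

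The main obstacle is the second-moment term. With $Q_t:=\sum_Xq_t(X)\ib_X\ib_X^\top$ and $P_t=(1-\gamma_t)Q_t+\gamma_t\E_{X\sim u}[\ib_X\ib_X^\top]$ one has $Q_t\preceq(1-\gamma_t)^{-1}P_t$, hence $P_t^+Q_tP_t^+\preceq(1-\gamma_t)^{-1}P_t^+$, so $\sum_Yq_t(Y)(\ib_Y^\top\lbh_t)^2=c_t^2\,\ib_{\Xt}^\top P_t^+Q_tP_t^+\ib_{\Xt}\le(1-\gamma_t)^{-1}\ib_{\Xt}^\top P_t^+\ib_{\Xt}$; taking $\E[\cdot\mid\mathcal{H}_{t-1}]$ gives $(1-\gamma_t)^{-1}\tr(P_t^+P_t)=(1-\gamma_t)^{-1}\rank P_t\le2d$ (using $\gamma_t\le\frac12$). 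Taking expectations in the displayed inequality with $X=X^*$, replacing $\lbh$ by $\lb$ via unbiasedness, and plugging in $\eta_{T+1}=\Theta(\lambda T^{-1/2}/\bdx^2)$ and $\sum_t\eta_t=\Theta(\lambda\sqrt T/\bdx^2)$ bounds the right-hand side by $O\big(\big(\frac{\bdx^2\ln|\Fl|}{\lambda}+\frac{d\lambda}{\bdx^2}\big)\sqrt T\big)$; adding the earlier $O(\sqrt T)$ exploration term — absorbed since $\frac{d\lambda}{\bdx^2}+\frac{\bdx^2\ln|\Fl|}{\lambda}=\Omega(1)$ for nontrivial instances — yields the claim. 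I expect the delicate parts to be the pseudo-inverse / PSD-ordering manipulations and the matching range bound, which both rest on $\operatorname{range}(P_t)$ being exactly $\operatorname{span}\{\ib_X:X\in\Fl\}$ so that $\lambda$, the smallest \emph{nonzero} eigenvalue of $\E_{X\sim u}[\ib_X\ib_X^\top]$, lower-bounds the smallest nonzero eigenvalue of $\gamma_t^{-1}P_t$; minor points are the power-mean monotonicity for the rate-change step and the identity $\E[\sum_t\ib_{X^*}^\top\lbh_t]=\E[\sum_t\ib_{X^*}^\top\lb_t]$, which needs $X^*$ independent of the estimation noise (true for an oblivious adversary, or when comparing against a fixed super arm) and is handled as in prior work.
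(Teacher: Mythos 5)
Your proposal is correct and follows essentially the same route as the paper's proof: the same decomposition into an $O(\sum_t\gamma_t)$ exploration term plus a Hedge-type regret on the estimated losses, the same unbiasedness argument via $P_t^+P_t\ib_X=\ib_X$, the same range bound $\eta_t|\lbh_t^\top\ib_X|\le\eta_t\bdx^2/(\gamma_t\lambda)=1$, and the same second-moment bound $\E_t[\ib_{\Xt}^\top P_t^+Q_tP_t^+\ib_{\Xt}]\le(1-\gamma_t)^{-1}\tr(P_t^+P_t)\le(1-\gamma_t)^{-1}d$. Your power-mean/potential-function treatment of the decreasing learning rate is exactly the paper's H\"older-inequality step in different clothing (and your $e^{-x}\le1-x+x^2$ is a slightly looser variant of the paper's $(e-2)x^2$), so there is nothing further to flag.
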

In other words, \combandms\ achieves either $O(T^{2/3})$ regret with high probability or
$O(\sqrt{T})$ expected regret 
as an any-time guarantee 
by choosing the hyper parameter $\a$ appropriately. 

There are two difficulties when it comes to performing \combandms; 
the first is sampling from the player's strategy $p_t(X_t)$ (Step~4), and the second is computing the CPM $P_t$ (Step~6).
Naive methods for sampling from $p_t$ and computing $P_t$ require $O(|\Fl|)$ and $O(d^2|\Fl|)$ computation times, respectively, where $|\Fl|$ is generally exponential in $d$, and so are the time complexities. 
In the following section, we propose efficient methods for sampling from any given constrained distribution $p(X ; \wb, \Fl)$ and for computing the CPM of $p(X ; \wb, \Fl)$.
Because $p_t$ is a mixture of two constrained distributions, we can efficiently sample from $p_t$ and compute the CPM of $p_t$ using the proposed methods. 
\section{\combandms\ on compressed decision sets}
\label{section:implementation}

As shown above, 
\combandms\ requires sampling from constrained distributions and computing CPMs as its building blocks, which generally require $O(|\Fl|)$ and $O(d^2|\Fl|)$ computation times, respectively. 
Moreover, computing $\bdx$ can also require $O(|\Fl|)$ time.
Those computation costs can be prohibitively expensive since $|\Fl|$ is generally exponential in $d$. 
In this section, we present efficient algorithms for the building blocks that are based on dynamic programming (DP) on a ZDD, which is a compressed representation of $\Fl$.
We first briefly describe ZDDs and then propose two DP methods 
for sampling and computing CPMs.
$\bdx$ can also be computed in a DP manner on a ZDD.

\subsection{Zero-suppressed binary decision diagrams (ZDDs)}
\label{subsection:ZDD} 

A ZDD \cite{minato1993zero} is a compact graph representation of a family of sets.
Given $\Fl \subseteq 2^{E}$, a ZDD for $\Fl$ is a directed acyclic graph (DAG) denoted by $\zdd{\Fl} = (V, A)$, where $V = \{0,1,\dots,|V|-1\}$ is a set of vertices and $A \subseteq V \times V$ is a set of directed arcs.
%
$\zdd{\Fl}$ contains one root vertex $r\in V$ and two terminal vertices: 1-terminal and 0-terminal.
Without loss of generality, we assume that $V=\{0,\dots,|V|-1\}$ is arranged in a topological order of $\zdd{\Fl}$; 
$r=|V|-1$ holds and  
the $b$-terminal ($b \in \{0,1\}$) is 
denoted simply by $b\in V$. 
%
Each non-terminal vertex $v \in V \backslash \{0,1\}$ is labeled by an integer in $E$ and has exactly two outgoing arcs: 1-arc and 0-arc.
A vertex pointed by the $b$-arc of $v$ is called the $b$-child of $v$.
We use $\lbl{v}$, $\arc{v}{b}$, and $\child{v}{b}$ to denote $v$'s label, $b$-arc, and $b$-child, respectively.
Consequently, $\arc{v}{b} = (v, \child{v}{b})$ holds.
%
We use $\Rt{v}{u}$ ($v,u \in V$) to denote a set of routes (directed paths) from $v$ to $u$ on $\zdd{\Fl}$, where a route $R \in \mathcal{R}_{v,u}$ is a set of directed arcs: $R \subseteq A$.
Given $R \in \Rt{v}{u}$, we define $X(R) \subseteq E$ as $X(R) := \{\lbl{ {v^{\prime}} } \mid (v^{\prime}, \child{ v^{\prime} }{1}) \in R \}$.
Then, $\zdd{\Fl}$ satisfies 
\begin{align}
\Fl = \{X(R) \mid R \in \mathcal{R}_{r,1} \}.
\label{eq:zdd}
\end{align}
Therefore, $\zdd{\Fl}$ represents the decision set $\Fl$ as a set of all routes from its root $r$ to the 1-terminal.
Note that once $\zdd{\Fl}$ is obtained, $\bdx=\max_{X\in\Fl} \sqrt{|X|}$ is easily computed by 
a DP method to find $R \in \mathcal{R}_{r,1}$ that maximizes $|X(R)|$.

In general, a ZDD is assumed to be {\it ordered} and {\it reduced}.
$\zdd{\Fl}$ is said to be {\it ordered} if $v > u \Rightarrow \lbl{v} < \lbl{u}$ holds for all $v, u \in V \backslash \{0, 1\}$.
%
A non-terminal vertex $v$ is said to be {\it redundant} if $\child{v}{1} = 0$: 
its 1-arc directly points to the 0-terminal.
A redundant vertex $v$ can be removed by 
replacing all $(u, v) \in A$ with $(u, \child{v}{0})$
without loss of the property~\eqref{eq:zdd}.
%
A non-terminal vertex $v$ is said to be {\it sharable} if there exists another vertex $v^\prime$ such that $\lbl{v} = \lbl{v^\prime}$ and $\child{v}{b} = \child{v^\prime}{b}$ ($b \in \{0,1\})$:
$v$ and $v^\prime$ have the same label and children.
A sharable vertex $v$ can be removed by replacing $(u, v) \in A$ with $(u, v^\prime)$.
%
$\zdd{\Fl}$ is said to be {\it reduced} if no vertex is redundant or sharable.
In this paper, we assume that $\zdd{\Fl}$ is ordered and reduced.
We show an example of a ZDD in Figure~\ref{fig:zddfigs}.

%
\iffigure  
\begin{wrapfigure}[22]{r}{0.4\linewidth} 
	\begin{minipage}[b]{0.49\linewidth}
	\begin{center}
		\includegraphics[width=0.9\linewidth, bb=0 0 540 720]{./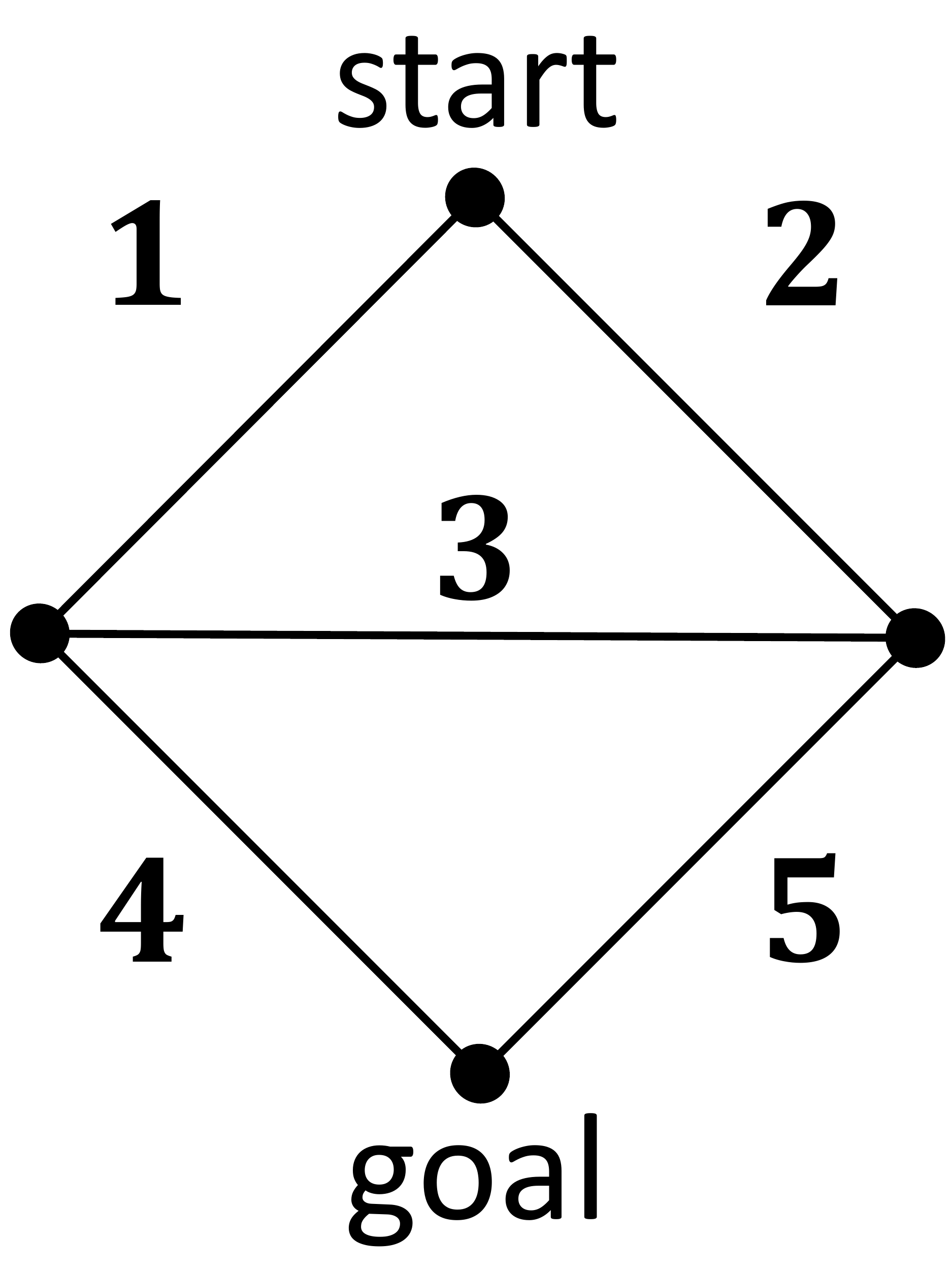}
		\subcaption{}
	\end{center}
	\end{minipage}
	\begin{minipage}[b]{0.49\linewidth}
	\begin{center}
		\includegraphics[width=0.9\linewidth, bb=0 0 540 720]{./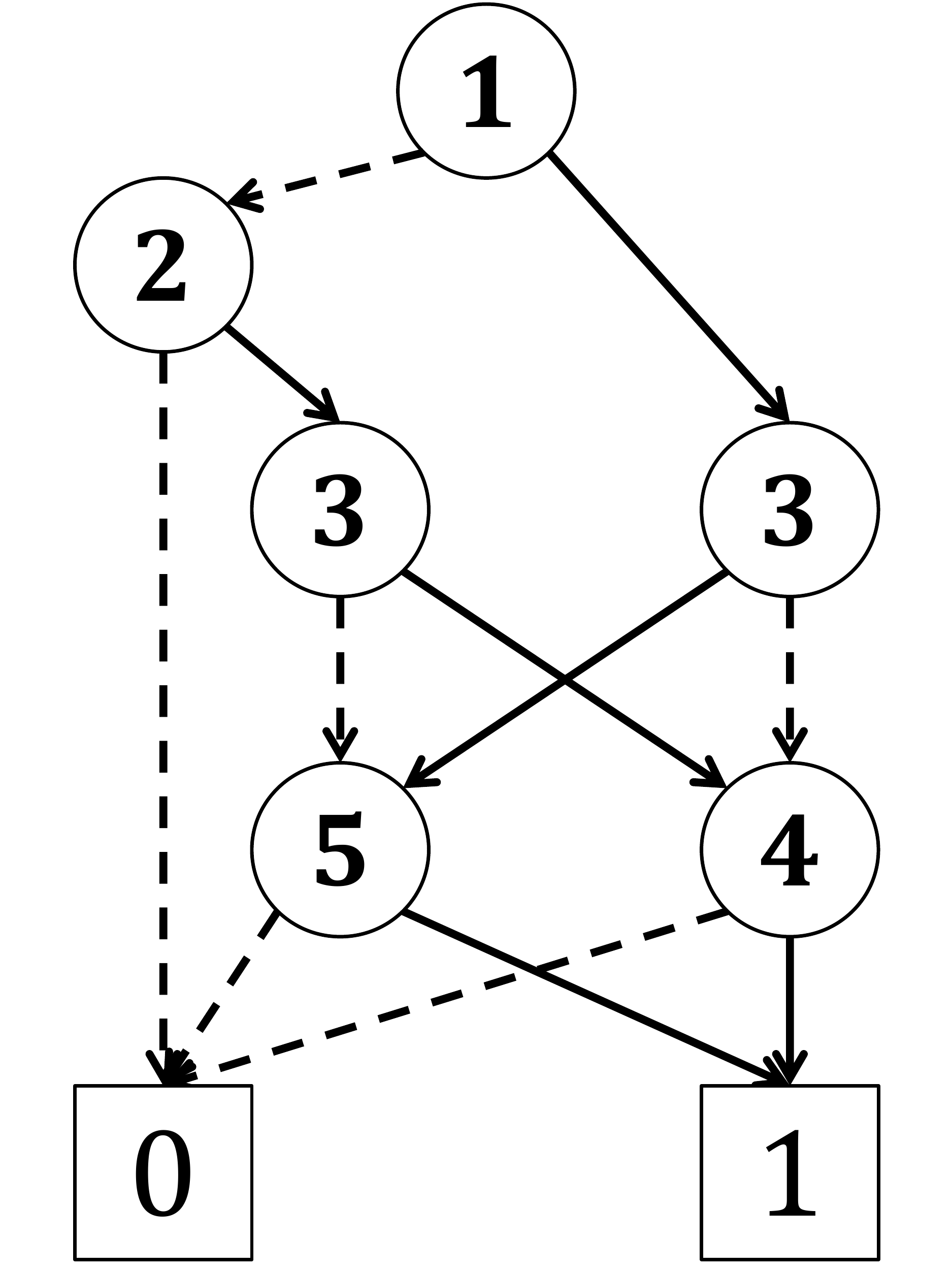}
		\subcaption{}
	\end{center}
\end{minipage}
\caption{
(a) An example network with an edge set $E=\{1,\dots,5\}$,
and (b) a ZDD that stores all paths from the start to the goal;
each non-terminal vertex $v$ is labeled $\lbl{v} \in E$, 
and 
\zarc s and \oarc s are indicated by dashed and solid lines, 
respectively. 
Note that we have $\Fl = \{X(R) \mid R \in \mathcal{R}_{r,1} \}=\{ \{1,4\}, \{2,5\}, \{1,3,5\}, \{2,3,4\}  \}$. }
\label{fig:zddfigs}
\end{wrapfigure}
\fi
%
The ZDDs are known to store various families of sets 
compactly in many applications. 
In particular, if a decision set is a set of specific network substructures 
(e.g, a set of \st paths or Steiner trees), 
the ZDD representing the decision set 
tends to be small. 
As we will see later, 
the time complexity of \combandms\ with a ZDD $\zdd{\Fl}=(V,A)$ 
is $O(d|V|)$, 
and so it runs fast if the ZDD 
is small. 
In theory, 
if $\Fl$ is a set of specific network substructures, 
then $|V|$ is bounded by a value that is exponential 
in the {\it pathwidth}~\cite{inoue2016acceleration}. 
Thus, even if a network-based decision set 
is exponentially large in $d$,  
the time complexity of our algorithm in each round can be polynomial in $d$ if the pathwidth of the network 
is bounded by a small constant. 
%
%

The {\it frontier-based search}~\cite{kawahara2014frontier}, 
which is based on Knuth's {\it Simpath} algorithm~\cite{knuth2011art1}, 
has recently received much attention as a fast top-down construction algorithm for ZDDs that represent a family of 
subnetworks.
In practice ZDDs are easily obtained via existing software~\cite{inoue2016graphillion} for various network-based constraints.
In this paper, we omit the details of ZDD construction and assume that a 
decision set $\Fl$ 
is represented by a ZDD $\zdd{\Fl}$ rather than by the explicit enumeration of the components of $\Fl$.

\subsection{Sampling from constrained distributions}
%

We here propose an efficient algorithm for sampling from a constrained distribution $p(X ; \wb, \Fl)$.
We first introduce the following {\it forward weight} (FW) $\F{v}$ and {\it backward weight} (BW) $\B{v}$ ($v \in V$):
\begin{align}
\F{v} &:= \sum_{R \in \Rt{r}{v}} w(R),
&
\B{v} &:= \sum_{R \in \Rt{v}{1}} w(R),
\label{eq:backward_forward_weight}
\end{align}
where $w(R)$ is an abbreviation of $w(X(R))=\prod_{i\in X(R)}w_i$.
By combining Eq.~\eqref{eq:constrained_distribution}, \eqref{eq:zdd}, and \eqref{eq:backward_forward_weight}, we obtain $Z(\wb, \Fl) = \B{r} = \F{1}$.
$\B{} := \{\B{0},\dots,\B{r}\}$ and $\F{} := \{\F{0},\dots,\F{r}\}$ can be efficiently computed in a dynamic programming manner on $\zdd{\Fl}$ as shown in Algorithm FW($\zdd{\Fl}, \wb$) and BW($\zdd{\Fl}, \wb$).
Once we obtain $\B{}$, we can draw a sample from $p(X ; \wb, \Fl)$ by top-down sampling on $\zdd{\Fl}$ 
without rejections 
as shown in Algorithm Draw($\zdd{\Fl}, \wb, \B{}$), 
where $\Ber{\theta}$ is the Bernoulli distribution 
with the parameter $\theta\in[0,1]$. 
The space and time complexity when computing $\F{}$ and $\B{}$ is proportional to $|V|$.
%
This constrained sampling is based on the same idea as that used in logic-based probabilistic modeling \cite{Ishihata08,Ishihata11}.

\begin{table}[htb]
	\centering
	\begin{tabular}[t]{ccc}
		\begin{minipage}[t]{0.3\textwidth}
			\label{alg:FW}
			\begin{algorithmic}[1]
				\STATE {\bf Algorithm} FW($\zdd{\Fl}, \wb$)
				\STATE $\F{r} \leftarrow 1$
				\STATE $\F{v} \leftarrow 0$ ($\forall v \in V \backslash \{r\}$)
				\FOR{$v = r,\dots,2$}
				\STATE $\F{ \child{v}{0} } +\!= \F{v}$
				\STATE $\F{ \child{v}{1} } +\!= w_{\lbl{v}} \F{v}$
				\ENDFOR
				\STATE $\F{} := \{\F{0},\dots,\F{r}\}$
				\RETURN $\F{}$
			\end{algorithmic}
		\end{minipage}
		&
		\begin{minipage}[t]{0.3\textwidth}
			\begin{algorithmic}[1]
				\label{alg:BW}
				\STATE {\bf Algorithm} BW($\zdd{\Fl}, \wb$)
				\STATE $\B{1} \leftarrow 1$
				\STATE $\B{v} \leftarrow 0$ ($\forall v \in V\backslash\{1\}$)
				\FOR{$v = 2,\dots,r$}
				\STATE $\B{v} \leftarrow \B{ \child{v}{0} } + w_{\lbl{v}} \B{ \child{v}{1} }$
				\ENDFOR
				\STATE $\B{} := \{\B{0},\dots,\B{r}\}$
				\RETURN $\B{}$
			\end{algorithmic}
		\end{minipage}
		&
		\begin{minipage}[t]{0.35\textwidth}
			\begin{algorithmic}[1]
				\STATE {\bf Algorithm} Draw($\zdd{\Fl}, \wb, \B{}$)
				\STATE $X \leftarrow \{\}$, $v \leftarrow r$
				\WHILE{$v > 1$}
				\STATE $\theta \leftarrow w_{\lbl{v}} \B{ \child{v}{1} } / \B{v}$
				\STATE $b \sim \Ber{\theta}$
				\STATE $X \leftarrow X \cup \{ \lbl{v} \} $ ~if~ $b = 1$
				\STATE $v \leftarrow \child{v}{b}$
				\ENDWHILE
				\RETURN $X$
			\end{algorithmic}
		\end{minipage}
	\end{tabular}
\end{table}

\subsection{Computing co-occurrence probabilities}

Given a constrained distribution $p(X ; \wb, \Fl)$, 
we define $\CP{i}{j} := p(i\in X, j \in X ; \wb, \Fl)$ as 
the co-occurrence probability of $i$ and $j$ ($i,j \in E$). 
We here propose an efficient algorithm for computing $\CP{i}{j}$ ($i\le j$), 
which suffices for obtaining $\CP{i}{j}$ for all $i,j\in[d]$ since $\CP{i}{j}=\CP{j}{i}$.
Using Eq.~\eqref{eq:constrained_distribution} 
and the notion of $\zdd{\Fl}$, $\CP{i}{j}$ can be written as follows:
\begin{align}
\CP{i}{j}
&= 
\sum_{R \in \Rt{r}{1} : i,j \in X(R)}
\frac{w(R)}{Z(\wb, \Fl)}.
\label{eq:CP}
\end{align}
%

We first consider $\CP{i}{i}$ as a special case of $\CP{i}{j}$.
By combining Eq.~\eqref{eq:backward_forward_weight} and \eqref{eq:CP}, we obtain
\begin{align*}
\CP{i}{i}
&=
\sum_{R \in \Rt{r}{1} : i \in X(R)}
\frac{w(R)}{Z(\wb, \Fl)}
=
\sum_{v \in V : \lbl{v}=i}
\sum_{
\substack{
R^\prime \in \Rt{r}{v} \\ 
R^{\prime\prime} \in \Rt{\child{v}{1}}{1}
}
}
\frac{ w( R^\prime \cup \{i\} \cup R^{\prime\prime} ) }{\B{r}}
=
\sum_{v \in V : \lbl{v}=i} 
\frac{\F{v} w_i \B{ \child{v}{1} }}{\B{r}}.
\end{align*}
%
Next, to compute $\CP{i}{j}$ $(i < j)$, 
we rewrite the right hand side of Eq.~\eqref{eq:CP} 
using the {\it backward weighted co-occurrence} (BWC) $\BC{v}{j}$ ($j \geq \lbl{v}$) as follows:
\begin{align*}
\CP{i}{j}
=
\sum_{v \in V : \lbl{v}=i}
\frac{\F{v} w_i \BC{ \child{v}{1} }{j}}{ \B{r} },
&&
\BC{v}{j}
&:=
\sum_{R \in \Rt{v}{1} : j \in X(R)}
w(R).
\end{align*}
Because $\BC{v}{j}$ is a variant of $\B{v}$, $\mathrm{C} := \{ \BC{v}{j} \mid v \in V, j \geq \lbl{v} \}$ can be computed in a similar manner to $\B{}$ as shown in Algorithm BWC($\zdd{\Fl}, \wb, \B{}$).
To conclude,  $\CP{i}{j}$ can be computed by Algorithm CPM($\zdd{\Fl}, \wb, \F{}, \B{}, \mathrm{C}$).
The total space and time complexity of computing $\mathrm{P} := \{\CP{i}{j} \mid i \leq j\}$ is $O(d|V|)$.

\begin{table}[htb]
\centering
\begin{tabular}[t]{cc}
\begin{minipage}[t]{0.45\textwidth}
\begin{algorithmic}[1]
\STATE {\bf Algorithm} BWC($\zdd{\Fl}, \wb, \B{}$)
\STATE $\BC{v}{j} \leftarrow 0$ ($\forall v \in V$, $\forall j \in E$)
\FOR{$v = 2,\dots,r$}
\STATE $\BC{v}{\lbl{v}} \leftarrow w_{\lbl{v}} \B{ \child{v}{1} }$ 
\FOR{$j = \lbl{v}+1,\dots,d$}
\STATE $\BC{v}{j} \leftarrow \BC{ \child{v}{0} }{j} + w_{\lbl{v}} \BC{ \child{v}{1} }{j}$
\ENDFOR
\ENDFOR
\STATE $\mathrm{C} := \{\BC{v}{j} \mid v \in V, j \geq \lbl{v}\}$
\RETURN $\mathrm{C}$
\end{algorithmic}
\end{minipage}
&
\begin{minipage}[t]{0.45\textwidth}
\begin{algorithmic}[1]
\STATE {\bf Algorithm} CPM($\zdd{\Fl}, \wb, \B{}, \F{}, \mathrm{C}$)
\STATE $\CP{i}{j} \leftarrow 0$ ($\forall i,j \in E$)
\FOR{$v = 2,\dots,r$}
\STATE $i \leftarrow \lbl{v}$
\STATE $\CP{i}{i} +\!= \F{v} w_i \B{ \child{v}{1} } / \B{r}$
\FOR{$j = i+1,\dots,d$}
\STATE $\CP{i}{j} +\!= \F{v} w_i \BC{ \child{v}{1} }{j} / \B{r}$
\ENDFOR 
\ENDFOR
\STATE $\mathrm{P} := \{\CP{i}{j} \mid i,j \in [d], i \leq j\}$
\RETURN $\mathrm{P}$
\end{algorithmic}
\end{minipage}
\end{tabular}
\end{table}

\section{Experiments}

We applied our \combandms\ with ZDDs to three network-based \cmab\ problems: the \osp, \dst, and \cg.
In the \osp\ and \dst, we used artificial networks to observe the scalability of our algorithm.
In the CG, we used two real-world networks to show the practical utility of our algorithm.
We implemented our algorithm in the C programming language and used Graphillion~\cite{inoue2016graphillion} to obtain the ZDDs. 
We note that constructing ZDDs with the 
software is not a drawback;  
in all of the following instances   
a ZDD was obtained within at most several seconds. 

\subsection{\osp\ and \dst\ on artificial networks}
\newcommand{\unif}{u(0,1)}
\newcommand{\mb}{\mbox{\boldmath $\mu$}}

{\bf Experimental Setting}:
We applied our \combandms\ with ZDDs to the \osp\ and \dst\ instances on artificial networks, 
which are undirected grid networks with $3\times m$ nodes ($m = 3,\dots,10$). 
In both problems, an arm corresponds to an edge of the given network.
In the \osp, a decision set $\Fl$ is a set of all $s$-$t$ paths from the starting node $s$ to the goal node $t$ that are placed on diagonal corners of the given grid.
In the \dst, $\Fl$ is a set of all Steiner trees that contains the four corners of the grid.
The aim of the player is to minimize the cumulative cost of the selected subnetworks over some time horizon.
%
In this experiment, we define the loss vector $\lb_t$ as follows:
We first uniformly sample $\mb_0$ from $[0,1]^d$.
In the $t$-th round, we set $\mb_t=\mb_{t-1}$ with probability $0.9$ or draw a new $\mb_t$ uniformly from $[0,1]^d$ with probability $0.1$.
Then, for each $i \in E$, we draw $h_i\sim\Ber{\mu_{t,i}}$ and set $\loss{t}{i} = 1/d$ if $h_t = 1$ otherwise $-1/d$.
This setting is a stochastic \cmab\ with distributions $\Ber{\mu_{t,i}}$ in the short run, but the adversary secretly reset $\mb_{t}$ with probability $0.1$ in each round to foil the player.



{\bf Compression Power}:
We first assess the compression power of ZDDs
constructed for the decision sets of the \osp\ and \dst\ instances. 
Table~\ref{table:zddsize} shows the sizes of decision sets $\Fl$ and those of the corresponding ZDDs $\zdd{\Fl}$.
In both problems, the ZDD size, $|V|$, grows much more slowly than $|\Fl|$.
In particular, with the \dst\ on the $3 \times 10$ grid, we see that $|V|$ is five orders of magnitude smaller than $|\Fl|$. 
In such cases, our \combandms, which only deals with a ZDD $\zdd{\Fl}$, 
is much more scalable than the naive method that directly deals with $\Fl$.

{\bf Empirical Regret}:
We next show that the empirical regrets of our \combandms\ and \comband\ actually grow sublinearly, 
where \comband\ is also performed on ZDDs. 
We applied these algorithms to the \osp\ and \dst\ on a $3 \times 10$ grid and computed their empirical regrets over a time horizon.
Figures~\ref{fig:experiments}~(a) and~(b) summarize their regrets for the \osp\ and \dst, respectively.
We see that all of the algorithms achieved more or less the same sublinear regrets. 
It was confirmed that all of the regret values were lower than those of the theoretical bounds stated in Theorem~\ref{thm:highprob} and Theorem~\ref{thm:expectation}; 
the precise values of the bounds are provided in the supplementary materials. 

\begin{table}[htb]
\caption{
The sizes of decision sets $\Fl$ and the corresponding ZDDs $\zdd{\Fl}$ for the \osp\ and \dst\ 
(numbers with more than six digits are rounded to three significant digits).
}
\label{table:zddsize}
\vspace{-10pt}
\begin{small}
\begin{center}
\begin{tabular}{ccrrrrrrrr}
& $m$ & 3 & 4 & 5 & 6 & 7 & 8 & 9 & 10 \\
\hline
\multirow{2}{*}{\osp}
& $|\Fl|$
& 12
& 38
& 125
& 414
& 1,369
& 4,522
& 14,934
& 49,322
\\
& $|V|$
& 31
& 76
& 183
& 451
& 1,039
& 2,287
& 4,991
& 11,071
\\ 
\hline
\multirow{2}{*}{\dst}
& $|\Fl|$
& 266
& 4,285
& 69,814
& 1.14$\times 10^6$ 
& 1.86$\times 10^7$ 
& 3.04$\times 10^8$ 
& 4.97$\times 10^9$ 
& 8.12$\times 10^{10}$ 
\\ 
& $|V|$
& 80
& 304
& 1,147
& 4,616
& 18,032
& 67,484
& 238,364
& 933,394
\end{tabular}
\end{center}
\end{small}
\end{table}

\iffigure  
\begin{figure}[htb] 
	\begin{minipage}[b]{0.35\hsize}
    		\begin{center}
			\includegraphics[width=1.00\textwidth, bb=0 0 576 432]{./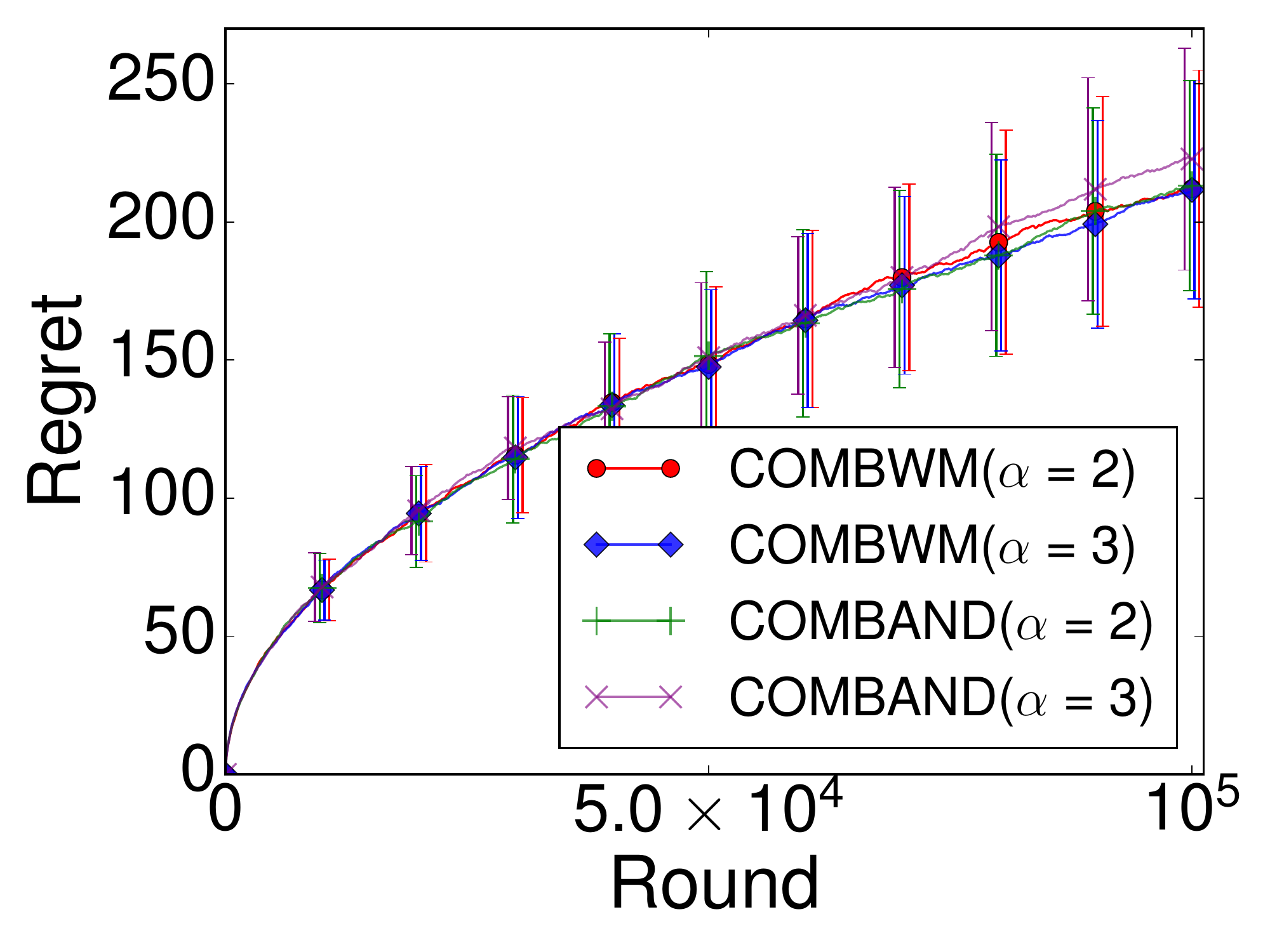}
			\subcaption{\osp\ on the $3\times 10$ grid}
		\end{center}
        		\begin{center}
			\includegraphics[width=1.00\textwidth, bb=0 0 576 432]{./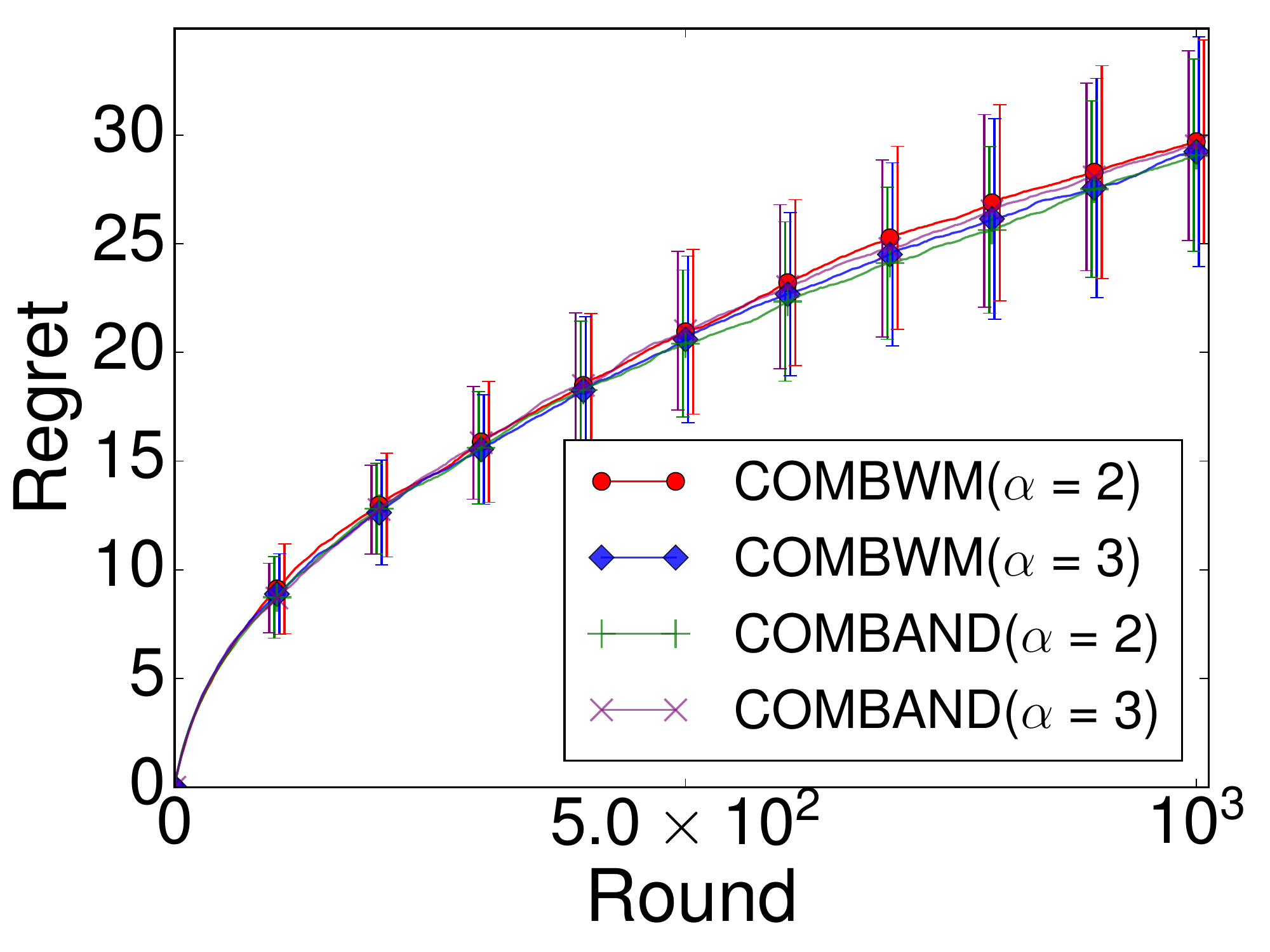}
			\subcaption{\dst\ on the $3\times 10$ grid}
		\end{center}
       \end{minipage}
   	\begin{minipage}[b]{0.35\hsize}
		\begin{center}
			\includegraphics[width=1.0\textwidth, bb=0 0 576 432]{./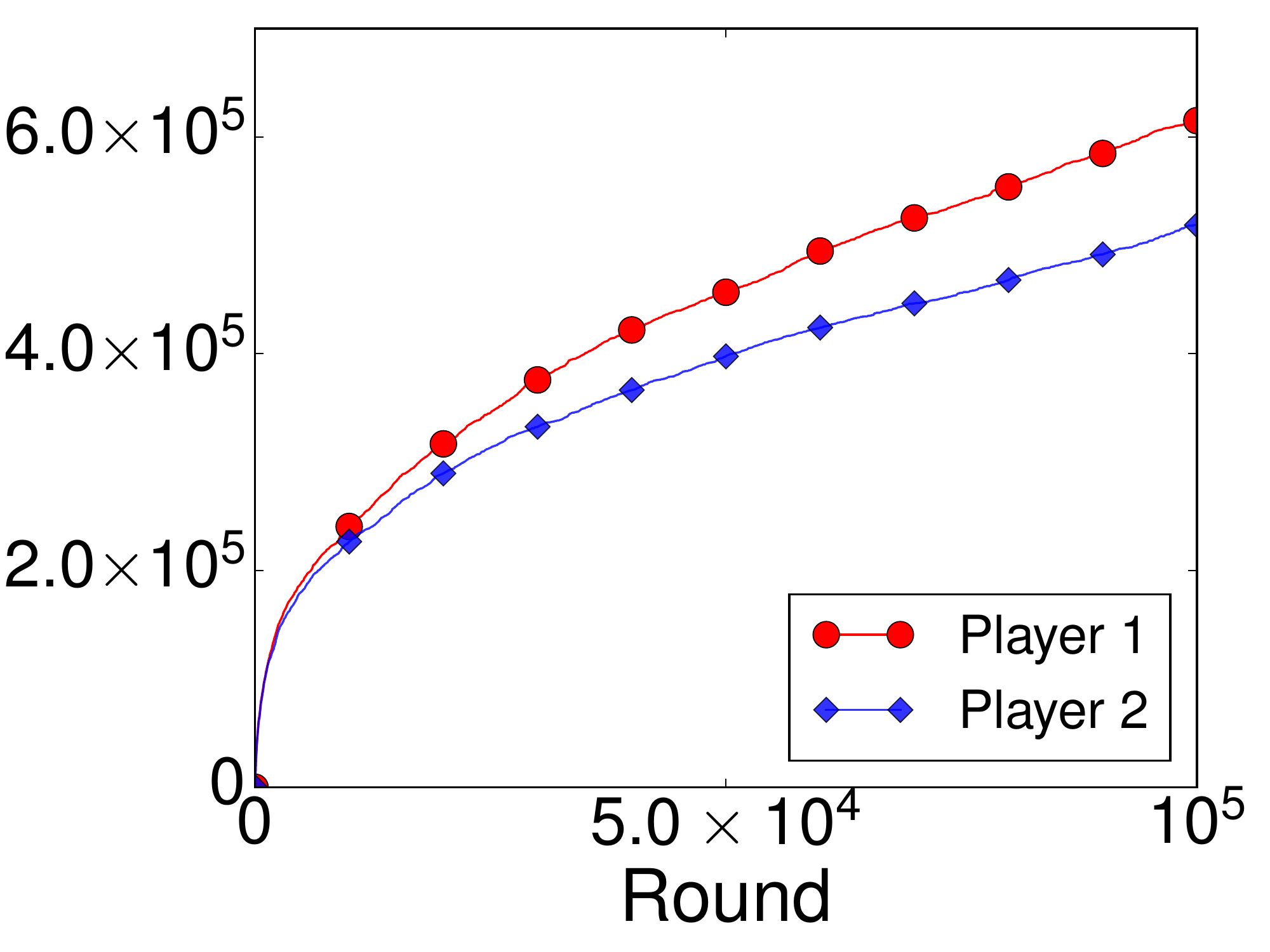}
			\subcaption{CG on MCI}
		\end{center}
       \begin{center}
			\includegraphics[width=1.0\textwidth, bb=0 0 576 432]{./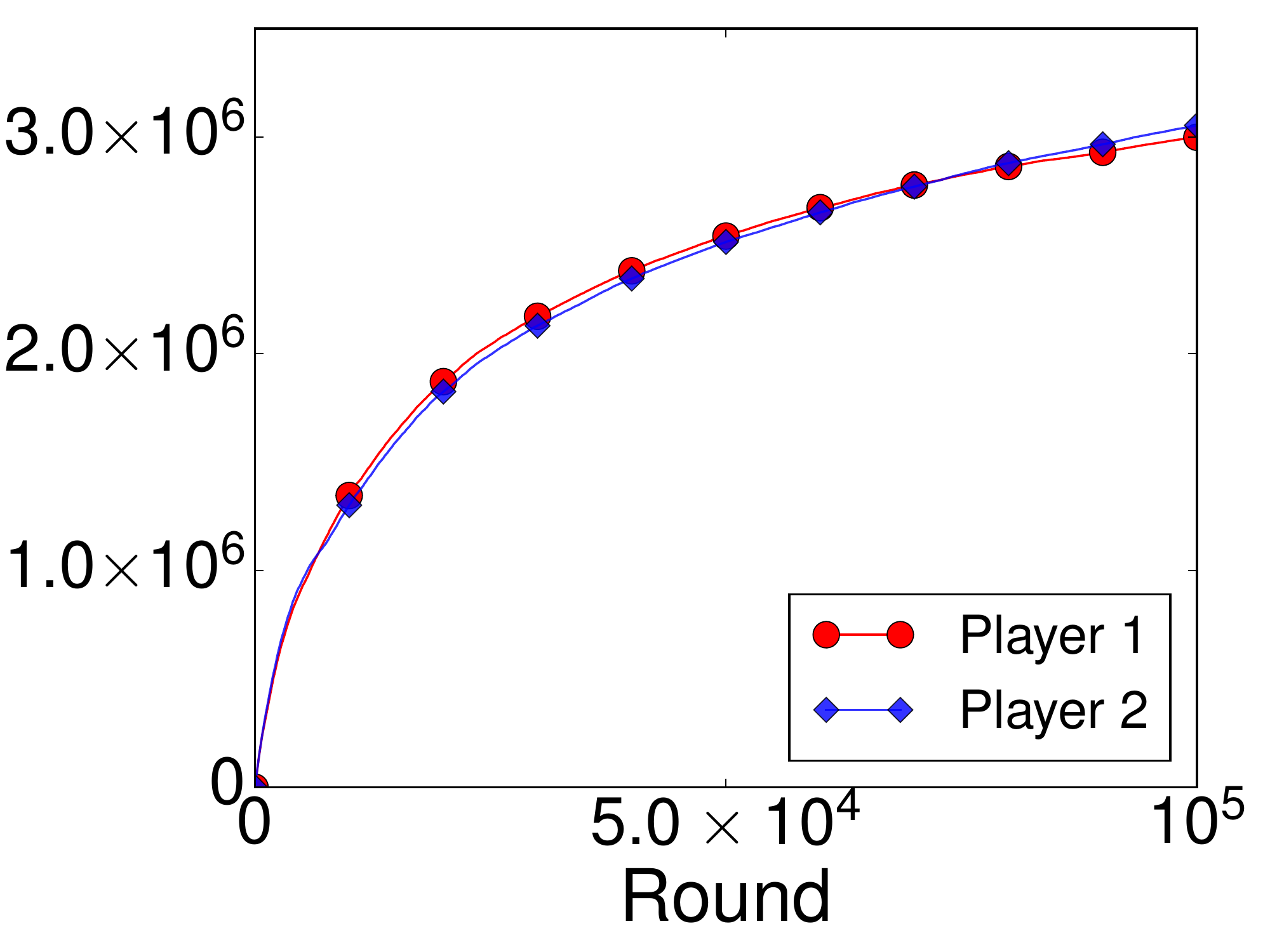}
			\subcaption{CG on ATT}
		\end{center}
	\end{minipage}
	\begin{minipage}[b]{0.28\hsize}
		\begin{center}
		\includegraphics[width=\textwidth, bb=0 0 300 377]{./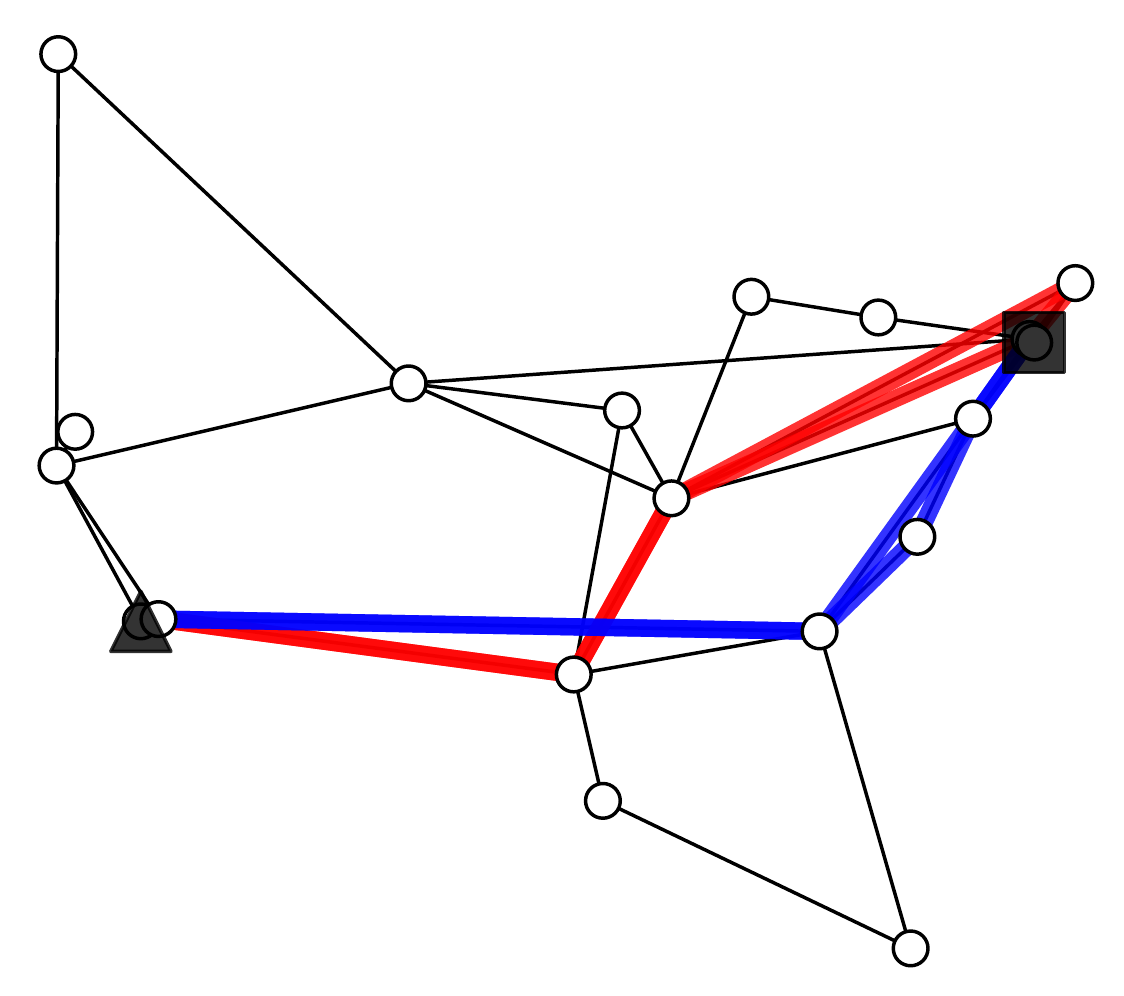} 
			\subcaption{MCI}
		\end{center}
		\vspace{-52pt}
		\begin{center}
		\includegraphics[width=\textwidth, bb=0 0 300 379]{./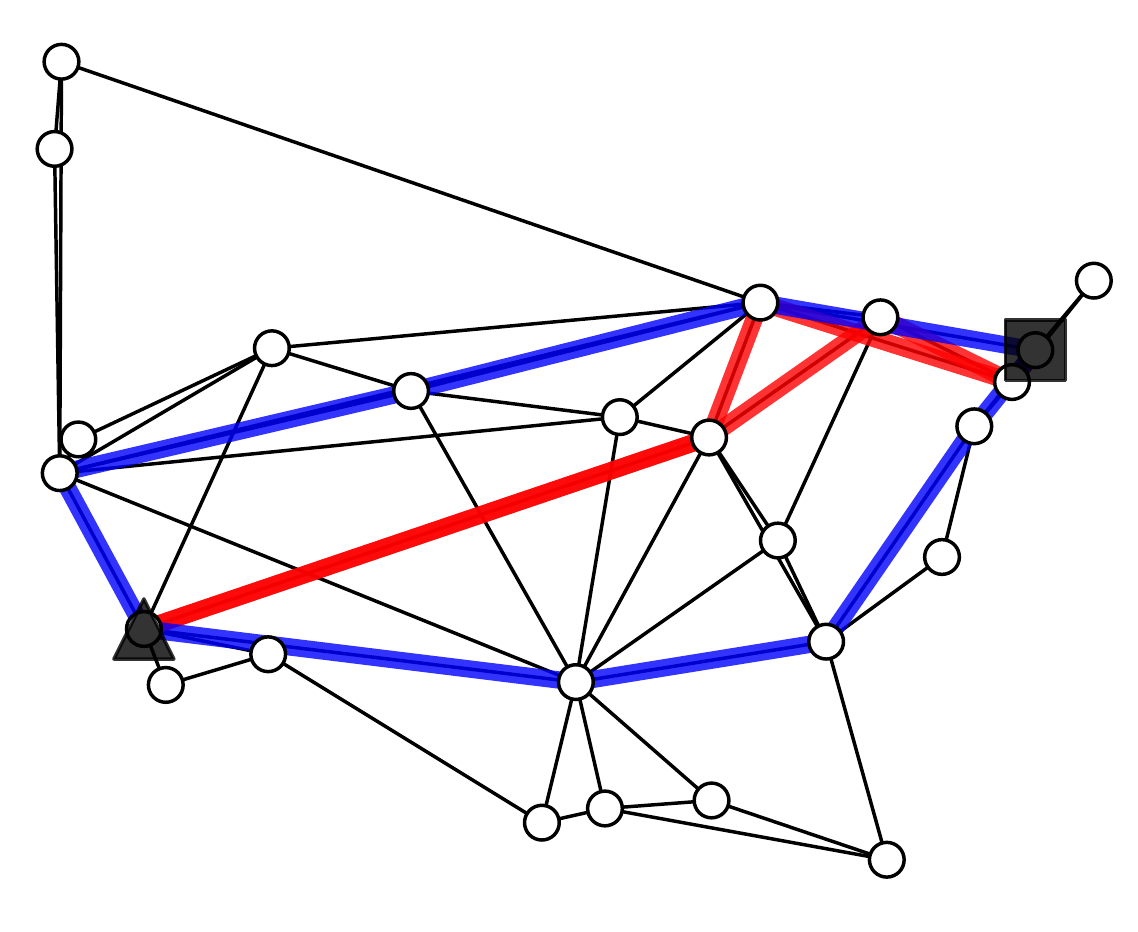}
			\subcaption{ATT}
		\end{center}
	\end{minipage}
\caption{
(a) and (b) show the regret values for the \osp\ and \dst, respectively. 
The regret values are averaged over 100 trials and the error bars indicate the standard deviations.
(c) and (d) show the regret values of each player for the \cg\ on the MCI and ATT, respectively. 
(e) and (f) are the topologies of the two networks; 
the triangles are the starting nodes and the squares are the goal nodes. 
The red (blue) paths indicate the top two paths most frequently chosen by player 1 (2).
}
\label{fig:experiments}
\end{figure}
\fi

\subsection{CG on real-world networks}

\def\nstart{n_{\text{start}}}
\def\nend{n_{\text{end}}}

{\bf Experimental Setting}:
We applied our \combandms\ with ZDDs to the \cg, which is a multi-player version of the \osp, on two real-world networks.
The \cg\ is described as follows:
Given $m$ players and an undirected network with a starting node $s$ and a goal node $t$, the players concurrently send a message from $s$ to $t$.
The aim of each player is to minimize the cumulative time needed to send $T$ messages.
In this problem an arm corresponds to an edge of 
a given network, and a super arm is an $s$-$t$ path. 
The loss value of an arm is the transmission time required when using the edge, 
and the cost of a super arm is the total transmission time needed to send a message along the selected $s$-$t$ path.
In the experiments, we assume that the loss of each edge increases with the number of players who use the same edge at the same time; 
therefore, a player regards the other players as adversaries.
We use $X_t^k \in \Fl$ ($k \in [m]$) to denote the $k$-th player's choice in the $t$-th round and use $X_{t,i}^k \in \{0, 1\}$ to denote the $i$-th element of $\ib_{X_t^k}$.
We also use $\ell_{t,i}^k$ to denote the transmission time that the $k$-th player consumes when sending 
a message using the $i$-th edge 
at the $t$-th round. 
We here define 
$\loss{t}{i}^k := \b_i \kappa^{N_{t,i}^{-k}}$, where $\b_i \in \R$ is the length of the edge, $\kappa$ is an overhead constant, and $N_{t,i}^{-k} := \sum_{k^\prime \not= k} X_{t,i}^{k^\prime}$ is the number of adversaries who also choose the $i$-th edge at the $t$-th round.
Namely, we assume that the transmission time of each edge increases exponentially with the number of players using the same edge at the same time.
Consequently, to reduce the total transmission time, the players should adaptively avoid contending with each other.
Note that this setting violates the assumption $|c_t| < 1$; however, in practice, this violation barely matters.
In the experiments, we set $m = 2$ and $\kappa = 10$.

%

We use two real-world communication networks in the Internet topology zoo~\cite{knight2011internet}: the InternetMCI network (MCI) and the ATT North America network (ATT).
Figure~\ref{fig:experiments} (e) and (f) illustrate the topologies of the MCI and ATT, respectively.
Both networks correspond to the U.S. map and we choose Los Angeles as the starting point $s$ and New York as the goal $t$.
The statistics for each network are shown in Table~\ref{tbl:info}.

\begin{table}[tbh]
\caption{Statistics for two real-world communication networks.}
\label{tbl:info}
\vspace{-10pt}
\begin{small}
\begin{center}
\begin{tabular}{crrrr}
Network & \# nodes & \# edges & \# $s$-$t$ paths $|\Fl|$ &ZDD size $|V|$ \\
\hline
MCI & 19 & 33 & 1,444   & 756    \\
ATT & 25 & 56 & 213,971 & 37,776 \\
\end{tabular}
\end{center}
\end{small}
\end{table}
%

{\bf Experimental Results}:
Figures~\ref{fig:experiments}~(c) and~(d) show the regret values of each player for the MCI and ATT, respectively. 
The figure shows that each player attained sublinear regrets.
Figures~\ref{fig:experiments}~(e) and~(f) show the top two most frequently selected paths for each player. 
We see that each player successfully avoided congestion.
In the {\it full information setting} where the players can observe the costs of all $s$-$t$ paths after choosing the current path, it is known that the Hedge algorithm~\cite{freund1999hedge} can achieve the Nash equilibria~\cite{kleinberg2009congestion} on the \cg. 
In this experiment, 
even though we employed the {\it bandit setting} 
where each player can only observe the cost of the selected path, the players successfully found almost optimal strategies on 
both networks.
To conclude, the experimental results suggest that our algorithm is useful for adaptive routing problems on real-world networks.

\section{Conclusion}
We proposed \combandms\ with ZDDs, which is a practical and theoretically guaranteed algorithm for the adversarial \cmab. 
We also  
showed that our algorithm is effective for network-based 
adversarial \cmab\ instances, 
which include various important problems 
such as the \osp, \dst, and \cg.
The efficiency of our algorithm is thanks to 
the compression of the decision sets via ZDDs, 
and its time and space complexities are 
linear in the size of ZDDs; 
more precisely, they are $O(d|V|)$. 
We showed experimentally that the ZDDs for the \osp, \dst, and \cg\ are much smaller in size than original decision sets.
%
Our algorithm is also theoretically guaranteed to  achieve either $O(T^{2/3})$ regret with high probability or $O(\sqrt{T})$ expected regret as an any-time guarantee;  
we experimentally confirmed that our algorithm attained sublinear regrets.
The results on \cg\ showed that our algorithm is useful for adaptive routing problems on real-world networks. 


	\newpage

	{\small
	\bibliographystyle{plain}
	\bibliography{./mybib}

\begin{thebibliography}{10}

\bibitem{ailon2014permutation}
N.~Ailon, K.~Hatano, and E.~Takimoto.
\newblock Bandit online optimization over the permutahedron.
\newblock In {\em International Conference on Algorithmic Learning Theory},
  pages 215--229. Springer, 2014.

\bibitem{arora2012online}
R.~Arora, O.~Dekel, and A.~Tewari.
\newblock Online bandit learning against an adaptive adversary: from regret to
  policy regret.
\newblock {\em arXiv preprint arXiv:1206.6400}, 2012.

\bibitem{audibert2014regret}
J.-Y. Audibert, S.~Bubeck, and G.~Lugosi.
\newblock Regret in online combinatorial optimization.
\newblock {\em Math. Oper. Res.}, 39(1):31--45, February 2014.

\bibitem{awerbuch2004adaptive}
B.~Awerbuch and R.~D. Kleinberg.
\newblock Adaptive routing with end-to-end feedback: Distributed learning and
  geometric approaches.
\newblock In {\em 36th Annual ACM Symposium on Theory of Computing}, pages
  45--53. ACM, 2004.

\bibitem{bartlett2008high}
P.~L. Bartlett, V.~Dani, T.~Hayes, S.~Kakade, A.~Rakhlin, and A.~Tewari.
\newblock High-probability regret bounds for bandit online linear optimization.
\newblock In {\em 21st Annual Conference on Learning Theory}, pages 335--342.
  Omnipress, 2008.

\bibitem{bergman2016decision}
D.~Bergman, A.~A. Cire, W.-J. van Hoeve, and J.~Hooker.
\newblock {\em Decision Diagrams for Optimization}.
\newblock Springer, first edition, 2016.

\bibitem{braun2016efficient}
G.~Braun and S.~Pokutta.
\newblock An efficient high-probability algorithm for linear bandits.
\newblock {\em arXiv preprint arXiv:1610.02072}, 2016.

\bibitem{cesa2012towards}
S.~Bubeck, N.~Cesa-Bianchi, and S.~M. Kakade.
\newblock Towards minimax policies for online linear optimization with bandit
  feedback.
\newblock In {\em Conference on Learning Theory}, pages 1--14, 2012.

\bibitem{cesa2012comband}
N.~Cesa-Bianchi and G.~Lugosi.
\newblock Combinatorial bandits.
\newblock {\em J. Comput. Syst. Sci.}, 78(5):1404 -- 1422, 2012.

\bibitem{combes2015combinatorial}
R.~Combes, M.~Sadegh Talebi, A.~Proutiere, and M.~Lelarge.
\newblock Combinatorial bandits revisited.
\newblock In {\em Advances in Neural Information Processing Systems}, pages
  2116--2124, 2015.

\bibitem{coudert1997solving}
O.~Coudert.
\newblock Solving graph optimization problems with {ZBDD}s.
\newblock In {\em 1997 European Conference on Design and Test}, page 224. IEEE
  Computer Society, 1997.

\bibitem{fan2012hoeffding}
X.~Fan, I.~Grama, and Q.~Liu.
\newblock Hoeffding's inequality for supermartingales.
\newblock {\em Stoch. Proc. Appl.}, 122(10):3545--3559, 2012.

\bibitem{freund1999hedge}
Y.~Freund and R.~E. Schapire.
\newblock Adaptive game playing using multiplicative weights.
\newblock {\em Games Econom. Behav.}, 29(1):79 -- 103, 1999.

\bibitem{gyorgy2007online}
A.~Gy{\"o}rgy, T.~Linder, G.~Lugosi, and G.~Ottucs{\'a}k.
\newblock The on-line shortest path problem under partial monitoring.
\newblock {\em J. Mach. Learn. Res.}, 8(Oct):2369--2403, 2007.

\bibitem{imase1991dynamic}
M.~Imase and B.~M. Waxman.
\newblock Dynamic {S}teiner tree problem.
\newblock {\em SIAM J. Discrete. Math.}, 4(3):369--384, 1991.

\bibitem{inoue2016graphillion}
T.~Inoue, H.~Iwashita, J.~Kawahara, and S.~Minato.
\newblock Graphillion: software library for very large sets of labeled graphs.
\newblock {\em Int. J. Software Tool. Tech. Tran.}, 18(1):57--66, 2016.

\bibitem{inoue2016acceleration}
Y.~Inoue and S.~Minato.
\newblock Acceleration of {ZDD} construction for subgraph enumeration via
  path-width optimization.
\newblock Technical report, TCS-TR-A-16-80, Hokkaido University, 2016.

\bibitem{Ishihata08}
M.~Ishihata, Y.~Kameya, T.~Sato, and S.~Minato.
\newblock Propositionalizing the {EM} algorithm by {BDD}s.
\newblock In {\em 18th International Conference on Inductive Logic
  Programming}, pages 44--49, 2008.

\bibitem{Ishihata11}
M.~Ishihata and T.~Sato.
\newblock Bayesian inference for statistical abduction using {M}arkov chain
  {M}onte {C}arlo.
\newblock In {\em 3rd Asian Conference on Machine Learning}, pages 81--96,
  2011.

\bibitem{kale2010slate}
S.~Kale, L.~Reyzin, and R.~E. Schapire.
\newblock Non-stochastic bandit slate problems.
\newblock In {\em Advances in Neural Information Processing Systems}, pages
  1054--1062, 2010.

\bibitem{kawahara2014frontier}
J.~Kawahara, T.~Inoue, H.~Iwashita, and S.~Minato.
\newblock Frontier-based search for enumerating all constrained subgraphs with
  compressed representation.
\newblock Technical report, TCS-TR-A-14-76, Hokkaido University, 2014.

\bibitem{kleinberg2009congestion}
R.~Kleinberg, G.~Piliouras, and E.~Tardos.
\newblock Multiplicative updates outperform generic no-regret learning in
  congestion games: Extended abstract.
\newblock In {\em 41st Annual ACM Symposium on Theory of Computing}, pages
  533--542. ACM, 2009.

\bibitem{knight2011internet}
S.~Knight, H.~X. Nguyen, N.~Falkner, R.~Bowden, and M.~Roughan.
\newblock The {I}nternet topology zoo.
\newblock {\em IEEE J. Sel. Area. Comm.}, 29(9):1765--1775, 2011.

\bibitem{knuth2011art1}
D.~E. Knuth.
\newblock {\em The Art of Computer Programming: Combinatorial Algorithms, Part
  1}, volume~4A.
\newblock Addison-Wesley Professional, 1st edition, 2011.

\bibitem{kuleshov2014algorithms}
V.~Kuleshov and D.~Precup.
\newblock Algorithms for multi-armed bandit problems.
\newblock {\em arXiv preprint arXiv:1402.6028}, 2014.

\bibitem{li2010contextual}
L.~Li, W.~Chu, J.~Langford, and R.~E. Schapire.
\newblock A contextual-bandit approach to personalized news article
  recommendation.
\newblock In {\em 19th international conference on World wide web}, pages
  661--670. ACM, 2010.

\bibitem{minato1993zero}
S.~Minato.
\newblock Zero-suppressed {BDD}s for set manipulation in combinatorial
  problems.
\newblock In {\em 30th International Design Automation Conference}, pages
  272--277. ACM, 1993.

\bibitem{morrison2016solving}
D.~R. Morrison, E.~C. Sewell, and S.~H. Jacobson.
\newblock Solving the pricing problem in a branch-and-price algorithm for graph
  coloring using zero-suppressed binary decision diagrams.
\newblock {\em {INFORMS} J. Comput.}, 28(1):67--82, 2016.

\bibitem{robbins1985some}
H.~Robbins.
\newblock Some aspects of the sequential design of experiments.
\newblock In {\em Herbert Robbins Selected Papers}, pages 169--177. Springer,
  1985.

\bibitem{rosenthal1973games}
R.~W. Rosenthal.
\newblock A class of games possessing pure-strategy {N}ash equilibria.
\newblock {\em Internat. J. Game Theory}, 2(1):65--67, 1973.

\bibitem{uchiya2010adversarial}
T.~Uchiya, A.~Nakamura, and M.~Kudo.
\newblock Algorithms for adversarial bandit problems with multiple plays.
\newblock In {\em International Conference on Algorithmic Learning Theory},
  pages 375--389. Springer, 2010.

\end{thebibliography}
	}

	\onecolumn
	\begin{center}
		\textbf{\large Supplementary material}
	\end{center}
	\setcounter{section}{0}
	\setcounter{equation}{0}
	\setcounter{figure}{0}
	\setcounter{table}{0}
	\setcounter{page}{1}
	\makeatletter
	\renewcommand{\thesection}{S\arabic{section}}
	\renewcommand{\theequation}{S\arabic{equation}}
	\renewcommand{\thefigure}{S\arabic{figure}}

In what follows we prove 
Theorem~\ref{thm:highprob} and Theorem~\ref{thm:expectation}. 
Section~\ref{supp:1} presents 
two concentration inequalities that are 
important in the proofs. 
In Section~\ref{supp:2} we provide some preliminaries 
for the proofs. 
Section~\ref{supp:3} and Section~\ref{supp:4} provide
the proofs of
Theorem~\ref{thm:highprob} and Theorem~\ref{thm:expectation}, 
respectively. 

\section{Concentration inequalities}\label{supp:1}
	The following concentration inequalities play crucial roles
	in the subsequent discussion. 

	\begin{thm}[Azuma-Hoeffding inequality]\label{thm:azuma}
		If a martingale difference sequence $\{Z_t\}_{t=1}^T$ satisfies
		$a_t\le Z_t \le b_t$ almost surely with some constants $a_t,b_t$ for $t=1,\dots,T$,
		then the following inequality holds with probability at least $1-\delta$:
		\[
		\sum_{t=1}^T Z_t\le \sqrt{\frac{\ln(1/\delta)}{2}\sum_{t=1}^T(b_t-a_t)^2}.
		\]
	\end{thm}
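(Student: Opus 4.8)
The plan is to use the standard exponential-moment (Chernoff) method together with a conditional version of Hoeffding's lemma. Fix $\lambda > 0$, write $S_T := \sum_{t=1}^T Z_t$, and let $\mathcal{F}_t$ be the natural filtration, so that $\E[Z_t \mid \mathcal{F}_{t-1}] = 0$ and $a_t \le Z_t \le b_t$ almost surely. By Markov's inequality applied to $e^{\lambda S_T}$, for any threshold $s > 0$ we have $\P(S_T \ge s) \le e^{-\lambda s}\,\E[e^{\lambda S_T}]$, so it suffices to bound the moment generating function $\E[e^{\lambda S_T}]$.

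First I would establish Hoeffding's lemma: if $X$ is a random variable with $\E[X] = 0$ and $a \le X \le b$ almost surely, then $\E[e^{\lambda X}] \le e^{\lambda^2 (b-a)^2/8}$. This follows by bounding $e^{\lambda x}$ on $[a,b]$ by the chord through $(a, e^{\lambda a})$ and $(b, e^{\lambda b})$ (convexity of $\exp$), taking expectations to reduce to a one-variable function $\varphi(\lambda)$, and showing $\varphi''(\lambda) \le (b-a)^2/4$ via a Taylor expansion with Lagrange remainder, where the elementary bound $p(1-p) \le 1/4$ appears; integrating twice gives the claim.

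Next I would peel off the sum one term at a time using the tower property. Conditioning on $\mathcal{F}_{T-1}$ and applying Hoeffding's lemma to $Z_T$ gives $\E[e^{\lambda S_T} \mid \mathcal{F}_{T-1}] = e^{\lambda S_{T-1}}\,\E[e^{\lambda Z_T} \mid \mathcal{F}_{T-1}] \le e^{\lambda S_{T-1}} e^{\lambda^2 (b_T - a_T)^2/8}$. Taking expectations and iterating for $t = T-1, T-2, \dots, 1$ yields $\E[e^{\lambda S_T}] \le \exp\!\big(\tfrac{\lambda^2}{8}\sum_{t=1}^T (b_t - a_t)^2\big)$.

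Finally I would optimize over $\lambda$. Combining with the Chernoff bound, $\P(S_T \ge s) \le \exp\!\big(-\lambda s + \tfrac{\lambda^2}{8}\sum_{t=1}^T (b_t-a_t)^2\big)$; choosing $\lambda = 4s / \sum_{t=1}^T (b_t - a_t)^2$ gives $\P(S_T \ge s) \le \exp\!\big(-2s^2 / \sum_{t=1}^T (b_t-a_t)^2\big)$. Setting the right-hand side equal to $\delta$ and solving for $s$ produces exactly $s = \sqrt{\tfrac{\ln(1/\delta)}{2}\sum_{t=1}^T (b_t - a_t)^2}$, which is the claimed bound. The only genuinely technical step is Hoeffding's lemma; the rest is bookkeeping with conditional expectations and an optimization over $\lambda$.
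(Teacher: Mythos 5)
Your proof is correct: the Chernoff bound combined with the conditional form of Hoeffding's lemma, the tower-property peeling, and the choice $\lambda = 4s/\sum_{t=1}^T(b_t-a_t)^2$ yield exactly $\P\big(\sum_t Z_t \ge s\big) \le \exp\big(-2s^2/\sum_t(b_t-a_t)^2\big)$, which inverts to the stated bound. The paper itself states this theorem without proof, as a standard concentration inequality imported from the literature, so there is no in-paper argument to compare against; what you have written is the canonical textbook derivation and it is sound, including the requirement that the $a_t, b_t$ be deterministic constants so that the conditional moment-generating-function bounds multiply cleanly across rounds.
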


	\begin{thm}[Bennett's inequality~\cite{fan2012hoeffding}]\label{thm:bennett}
		If a supermartingale difference sequence $\{Z_t\}_{t=1}^T$
        with respect to a filtration $\{\mathcal{F}_t\}_{t=0}^{T-1}$ satisfies
		$Z_t\le b$ with some constant $b>0$ 
        for $t=1,\dots,T$,
		then, for any $v\ge0$, we have the following with probability at least $1-\delta$:
		\[
		\sum_{t=1}^T\var{Z_t \mid \mathcal{F}_{t-1}}\ge v \quad\text{or}\quad \sum_{t=1}^TZ_t \le \frac{b}{3}\ln\frac{1}{\delta} + \sqrt{2v\ln\frac{1}{\delta}}.
		\]
    	\end{thm}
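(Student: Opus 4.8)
The plan is to follow the classical exponential-supermartingale (Chernoff) argument that underlies all Bennett/Freedman-type concentration bounds. Write $S_t := \sum_{s=1}^t Z_s$ and $V_t := \sum_{s=1}^t \var{Z_s \mid \mathcal{F}_{s-1}}$. The first and main ingredient is a one-step bound on the conditional moment generating function: for any $0 < \lambda < 3/b$,
\[
\E\!\left[e^{\lambda Z_t} \mid \mathcal{F}_{t-1}\right] \le \exp\!\left(\frac{\lambda^2}{2\left(1 - \lambda b/3\right)}\,\var{Z_t \mid \mathcal{F}_{t-1}}\right).
\]
I would derive this from the elementary inequality $e^{\lambda x} \le 1 + \lambda x + \frac{e^{\lambda b} - 1 - \lambda b}{b^2}\,x^2$, valid for every $x \le b$ because $y \mapsto (e^y - 1 - y)/y^2$ is nondecreasing; then taking conditional expectations, discarding $\lambda\,\E[Z_t \mid \mathcal{F}_{t-1}] \le 0$ (here the supermartingale property is used), and finally using $e^y - 1 - y \le \frac{y^2}{2(1-y/3)}$ for $0 \le y < 3$ together with $1 + z \le e^z$.

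Given this, fix $\lambda \in (0, 3/b)$ and define $M_t := \exp\!\big(\lambda S_t - \tfrac{\lambda^2}{2(1-\lambda b/3)}\,V_t\big)$ with $M_0 = 1$. The one-step bound says exactly $\E[M_t \mid \mathcal{F}_{t-1}] \le M_{t-1}$, so $(M_t)_{t=0}^T$ is a nonnegative supermartingale and $\E[M_T] \le 1$. The claim is trivial when $v = 0$, so assume $v > 0$; since the exponent defining $M_T$ decreases in $V_T$, on the event $\{V_T < v\}$ we have $M_T \ge \exp\!\big(\lambda S_T - \tfrac{\lambda^2 v}{2(1-\lambda b/3)}\big)$. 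Markov's inequality then gives, for every $u > 0$ and every $\lambda \in (0, 3/b)$,
\[
\P\!\left(S_T > u \text{ and } V_T < v\right) \le \exp\!\left(-\lambda u + \frac{\lambda^2 v}{2(1-\lambda b/3)}\right).
\]

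It remains to optimize over $\lambda$. The function $\lambda \mapsto \tfrac{\lambda^2 v}{2(1-\lambda b/3)}$ on $(0, 3/b)$ is precisely the cumulant-generating-function bound of a sub-gamma variable with variance factor $v$ and scale $b/3$; computing its Legendre transform (a routine calculation) shows that the right-hand side above is at most $\delta$ once $u = \tfrac{b}{3}\ln\tfrac{1}{\delta} + \sqrt{2v\ln\tfrac{1}{\delta}}$. Reading the contrapositive of ``$\{S_T > u \text{ and } V_T < v\}$ has probability $\le \delta$'' yields the stated dichotomy: with probability at least $1-\delta$, either $\sum_{t=1}^T \var{Z_t \mid \mathcal{F}_{t-1}} \ge v$, or $\sum_{t=1}^T Z_t \le \tfrac{b}{3}\ln\tfrac{1}{\delta} + \sqrt{2v\ln\tfrac{1}{\delta}}$.

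The hard part is the first step carried out honestly for a supermartingale rather than a martingale: with only the one-sided control $Z_t \le b$ available, passing from the conditional second moment $\E[Z_t^2 \mid \mathcal{F}_{t-1}]$ (which is what the Taylor-type bound naturally produces) down to the conditional variance $\var{Z_t \mid \mathcal{F}_{t-1}}$ must exploit $\E[Z_t \mid \mathcal{F}_{t-1}] \le 0$ carefully --- this is exactly the technical content supplied by \cite{fan2012hoeffding}, and it is also where the sharp scale constant $b/3$ (rather than a looser $2b/3$) comes from, via the exact sub-gamma inversion instead of the crude estimate $\sqrt{A^2 + B} \le A + \sqrt{B}$.
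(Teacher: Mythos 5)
The paper never proves this statement --- it is imported verbatim from the cited reference --- so there is no in-house proof to compare against; I am assessing your argument on its own terms. Your architecture is the standard and correct one: a one-step bound on the conditional moment generating function, the product process $M_t$ shown to be a nonnegative supermartingale, Markov's inequality restricted to the event $\{V_T < v\}$, and the sub-gamma inversion, which indeed returns exactly $\frac{b}{3}\ln\frac{1}{\delta} + \sqrt{2v\ln\frac{1}{\delta}}$. Those pieces are all sound. The genuine gap is the one you flag yourself: the elementary inequality $e^{\lambda x}\le 1+\lambda x+\frac{e^{\lambda b}-1-\lambda b}{b^2}x^2$ for $x\le b$, after taking conditional expectations and discarding $\lambda\,\E[Z_t\mid\mathcal{F}_{t-1}]\le 0$, yields a bound with the conditional \emph{second moment} $\E[Z_t^2\mid\mathcal{F}_{t-1}]$ in the exponent, not the conditional variance. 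Since $\E[Z_t^2\mid\mathcal{F}_{t-1}]=\var{Z_t \mid \mathcal{F}_{t-1}}+(\E[Z_t\mid\mathcal{F}_{t-1}])^2\ge \var{Z_t \mid \mathcal{F}_{t-1}}$, the substitution goes the wrong way, and the statement you actually derive (with $\sum_t\E[Z_t^2\mid\mathcal{F}_{t-1}]\ge v$ as the first alternative) is strictly weaker than the one asserted. Writing ``this is exactly the technical content supplied by the reference'' does not close the step; as a self-contained proof the key inequality remains unproved for genuinely supermartingale (non-centered) differences, where $\E[Z_t\mid\mathcal{F}_{t-1}]$ can be arbitrarily negative and no simple monotonicity argument bridges second moment and variance.

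Two mitigating observations. First, you have correctly isolated the only nontrivial point, and everything surrounding it is right, including the often-botched constant $b/3$ via the exact Legendre transform rather than the crude $\sqrt{A^2+B}\le A+\sqrt{B}$. Second, the gap is immaterial to this paper's results: in both places where Theorem~\ref{thm:bennett} is invoked (the sequence $Y_t$ in Lemma~\ref{lem:highprob2}, which is explicitly centered, and $\lbh_t^\top\ib_X-\lb_t^\top\ib_X$ in Lemma~\ref{lem:highprob3}, which has conditional mean zero by Eq.~\eqref{basic4}), the differences form an honest martingale difference sequence, so $\E[Z_t^2\mid\mathcal{F}_{t-1}]=\var{Z_t \mid \mathcal{F}_{t-1}}$ and your argument goes through in full. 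So you have proved the version of the theorem the paper actually needs, but not the version the paper states.
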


\section{Preliminaries for the proofs}\label{supp:2}

	\begin{algorithm}
	\caption{\combandms$(\alpha, \Fl)$}\label{alg:combandalt}
	\begin{algorithmic}[1]
		\STATE $\tlweight{1}{i}\gets1$ and $\weight{1}{i}\gets1$ ($i\in E$)
		\FOR{$t=1,\dots,T$}
		\STATE $\gamma_t\gets \frac{t^{-1/\a}}{2}$, $\eta_t\gets\frac{\lambda t^{-1/\a}}{2\bdx^2}$, $\eta_{t+1}\gets\frac{\lambda (t+1)^{-1/\a}}{2\bdx^2}$
		\STATE $X_t \sim p_t$
		\STATE $c_t\gets\lb_t^\top\ib_{\Xt}$ ($\lb_t$ is unobservable)
			\STATE $P_t\gets(1-\gamma_t)Q_t + \gamma_t U$
			\STATE $\lbh_t\gets c_tP_t^+\ib_{\Xt}$
			\STATE $\weight{t+1}{i}\gets \tlweight{t}{i}\exp\big(-\eta_t\hloss{t}{i}\big)$ $(i\in E)$
			\STATE $\tlweight{t+1}{i}\gets \weight{t+1}{i}^{{\eta_{t+1}}/{\eta_{t}}}$ $(i\in E)$
			\ENDFOR
			\RETURN $\{X_t \mid t \in [T]\}$
		\end{algorithmic}
	\end{algorithm}

    We here rewrite Algorithm~\ref{alg:comband} equivalently as in Algorithm~\ref{alg:combandalt},
    which will be helpful in terms of understanding the subsequent discussion.
	In what follows, we let $K:=|\Fl|$ and $\mu:=1/K$.
    We also define
	$\E_t[\cdot]:=\E[\cdot\mid \Xsub{1:t-1}, \lsub{1:t} ]$
	as the conditional expectation in the $t$-th round
	given all the history of rounds $1,\dots,t-1$
	and the loss vector in round $t$.
	Similarly, we define the conditional variance in
	round $t$ as
	$\vart{\cdot}:=
	\var{\cdot\mid \Xsub{1:t-1},\lsub{1:t}}$.
    For any vector $\xb=(x_1,\dots,x_n)^\top\in\R^n$ and $p>0$, we define the $p$-norm of $\xb$ as
$\norm{\xb}_p:=(\sum_{i=1}^n |x_i|^p)^{1/p}$,
and we often use $\norm{\xb}$ to express $\norm{\xb}_2$.
	For any matrix $P\in\R^{n\times n}$, 
	we denote its $i,j$ entry as $P(i,j)$. 
	We define the trace of $P$ as  $\tr(P):=\sum_{i=1}^n P(i,i)$ 
	and denote the spectral norm of $P$ as $\norm{P}$,
	i.e., $\norm{P}$
	is the largest singular value of $P$. 
    For any symmetric 
	matrices $P,Q\in\R^{n\times n}$, 
	we use $P\succeq Q$ to express the fact that
	the smallest eigenvalue of $P-Q$ is non-negative.
    
	For all $t\in[T]$, 
	we define distributions $u$ and $q_t$ over $\Fl$, 
    and $d\times d$ matrices $U$ and $Q_t$ as follows:
 	\begin{align*}
 	&u(X):=p(X;\ib_E, \Fl)=\mu, 
     & &
     q_t(X):=p(X;\wbt_t, \Fl)=\frac{\wt_t(X)}
     {\sum_{X'\in\Fl}\wt_t(X') },
     \\
     &U:=\E_{X\sim u}
     [\ib_X\ib_X^\top]=\sum_{X\in\Fl}\mu\ib_X\ib_X^\top,
    & & 
    Q_t:=\E_{X\sim q_t}
    [\ib_X\ib_X^\top]=\sum_{X\in\Fl}q_t(X)\ib_X\ib_X^\top,
 	\end{align*}
where $\wt_t(X)$ is an abbreviation of $\prod_{i\in X}\wt_{t,i}$. 
	Note that we have the following for any $X\in\Fl$ and $t\in[T]$:
	\begin{align*}
	&p_t(X)=(1-\gamma_t)q_t(X)+\gamma_t u(X), \\
	&P_t=\E_{X\sim p_t}[\ib_X\ib_X^\top]=\sum_{X\in\Fl}p_t(X)\ib_X\ib_X^\top=(1-\gamma_t)Q_t+\gamma_tU,
	\end{align*}
	where $p_t(X)$ and $P_t$ are those defined in 
    Eq.~\eqref{def:dists} and \eqref{eq:cmppt}, respectively. 
    We note that the weight values $w_{t,i}\ (i\in E)$ defined 
    in Step~8 of Algorithm~\ref{alg:combandalt} 
    satisfy the following for any $X\in\Fl$ and $t\ge2$:
	\begin{align}
	w_t(X)&=
	\exp\bigg(-\eta_{t-1}\sum_{t'=1}^{t-1}\lbh_{t'}^\top\ib_X\bigg),
	\label{eq:wtsum}
	\\
	q_t(X)&=
	\frac{w_t(X)^{\frac{\eta_t}{\eta_{t-1}}}}
	{\sum_{X'\in\Fl} w_t(X')^{\frac{\eta_t}{\eta_{t-1}}}}.
	\label{eq:qtsum}
	\end{align}
	For convenience, we let $\eta_0:=\eta_1$  in what follows,
	which makes Eq.~\eqref{eq:qtsum} hold
	for $t=1$ since we have $\weight{1}{i}=\tlweight{1}{i}=1$ 
    for all $i\in E$. 
    
    Recall that $\lambda$ is the smallest non-zero eigenvalue 
of $U=\E_{X\sim u}[\ib_X\ib_X^\top]$, 
and that $|c_t|\le1$ holds because of the loss value assumption. 
	The following basic results will be used repetitively in what follows.
	\begin{lem}[Basic results]\label{lem:basic}
		For any $X\in\Fl$ and $t\in[T]$, we have
		\begin{align}
		&\norm{P_t^+}\le\frac{1}{\gamma_t \lambda}
		\quad\text{and}\quad
		|\lbh_t^\top\ib_X|\le\frac{\bdx^2}{\gamma_t\lambda}, \label{basic1}\\
		&\E_t[　\ib_{\Xt}^\top P_t^+ \ib_{\Xt}]
		\le d, \label{basic2}\\
		&P_tP_t^+\ib_X=\ib_X, \label{basic3}\\
		&\E_t[　\lbh_t^\top \ib_X]=\lb_t^\top\ib_X.  \label{basic4}
		\end{align}
	\end{lem}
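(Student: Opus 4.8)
The plan is to establish each of the four claims in Lemma~\ref{lem:basic} by direct computation, leaning on the mixture structure $P_t = (1-\gamma_t)Q_t + \gamma_t U$ and the spectral properties of $U$. The first claim \eqref{basic1} is the spectral heart of the lemma: since $Q_t \succeq 0$ and $U \succeq \lambda$ on the subspace $\mathcal{V} := \mathrm{span}\{\ib_X : X \in \Fl\}$ (because $\lambda$ is the smallest \emph{nonzero} eigenvalue of $U$ and $U$'s range is exactly $\mathcal{V}$), we get $P_t \succeq \gamma_t U \succeq \gamma_t \lambda$ on $\mathcal{V}$, while $P_t$ annihilates $\mathcal{V}^\perp$ just as $U$ and $Q_t$ do. Hence the smallest nonzero eigenvalue of $P_t$ is at least $\gamma_t\lambda$, so $\norm{P_t^+} \le 1/(\gamma_t\lambda)$. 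For the bound on $|\lbh_t^\top \ib_X|$, I would write $\lbh_t^\top \ib_X = c_t\, \ib_{\Xt}^\top P_t^+ \ib_X$ and apply Cauchy--Schwarz together with $\norm{P_t^+} \le 1/(\gamma_t\lambda)$, $|c_t|\le 1$, and $\norm{\ib_X}, \norm{\ib_{\Xt}} \le \bdx$, giving $|\lbh_t^\top\ib_X| \le \bdx^2/(\gamma_t\lambda)$.

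For \eqref{basic2}, the key identity is $\E_t[\ib_{\Xt}\ib_{\Xt}^\top] = P_t$, so $\E_t[\ib_{\Xt}^\top P_t^+ \ib_{\Xt}] = \E_t[\tr(P_t^+ \ib_{\Xt}\ib_{\Xt}^\top)] = \tr(P_t^+ P_t)$. Since $P_t^+ P_t$ is the orthogonal projector onto $\mathcal{V} = \mathrm{range}(P_t)$, its trace equals $\rank(P_t) = \dim \mathcal{V} \le d$. Claim \eqref{basic3} is the standard pseudo-inverse fact that $P_t P_t^+$ is the orthogonal projector onto $\mathrm{range}(P_t)$; since $\ib_X \in \mathcal{V} = \mathrm{range}(P_t)$ for every $X \in \Fl$, the projector fixes it, so $P_t P_t^+ \ib_X = \ib_X$. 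Claim \eqref{basic4} follows by combining \eqref{basic3} with the first computation: $\E_t[\lbh_t^\top \ib_X] = \E_t[c_t \ib_{\Xt}^\top P_t^+ \ib_X] = \E_t[\ib_{\Xt}^\top(\lb_t^\top\ib_{\Xt}) P_t^+ \ib_X]$, and writing $c_t = \lb_t^\top \ib_{\Xt} = \ib_{\Xt}^\top \lb_t$ this equals $\E_t[\ib_{\Xt}\ib_{\Xt}^\top] P_t^+ \ib_X \cdot$ contracted appropriately, i.e. $\lb_t^\top P_t P_t^+ \ib_X = \lb_t^\top \ib_X$ by \eqref{basic3}. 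One must be slightly careful about the order of the rank-one pieces: $\lbh_t = c_t P_t^+ \ib_{\Xt} = (\lb_t^\top \ib_{\Xt}) P_t^+ \ib_{\Xt} = P_t^+ \ib_{\Xt}\ib_{\Xt}^\top \lb_t$, so $\E_t[\lbh_t] = P_t^+ \E_t[\ib_{\Xt}\ib_{\Xt}^\top]\lb_t = P_t^+ P_t \lb_t$, and then $\E_t[\lbh_t^\top \ib_X] = \lb_t^\top P_t P_t^+ \ib_X = \lb_t^\top \ib_X$.

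The main obstacle, such as it is, is being careful about the subspace $\mathcal{V}$: all four claims hinge on the fact that $\mathrm{range}(U) = \mathrm{range}(Q_t) = \mathrm{range}(P_t) = \mathcal{V}$ and that every indicator $\ib_X$ lives in this common subspace, so that the pseudo-inverse behaves like a genuine inverse when restricted to $\mathcal{V}$. I would state this as a preliminary observation and then the rest is routine. The only other point requiring a line of justification is that $P_t \succeq \gamma_t U$, which needs $Q_t \succeq 0$ (immediate, as $Q_t$ is an average of rank-one PSD matrices $\ib_X\ib_X^\top$) and $1-\gamma_t \ge 0$ (true since $\gamma_t = t^{-1/\alpha}/2 \le 1/2$).
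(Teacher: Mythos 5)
Your proof is correct and follows essentially the same route as the paper: $P_t\succeq\gamma_t U$ for \eqref{basic1}, the trace identity $\E_t[\tr(P_t^+\ib_{\Xt}\ib_{\Xt}^\top)]=\tr(P_t^+P_t)\le d$ for \eqref{basic2}, and the projector property of $P_tP_t^+$ combined with $\E_t[\ib_{\Xt}\ib_{\Xt}^\top]=P_t$ for \eqref{basic3} and \eqref{basic4}. The only difference is that you prove \eqref{basic3} directly via the common range subspace $\mathcal{V}$, whereas the paper cites Lemma~14 of~\cite{cesa2012comband} for that step; your explicit treatment of $\mathrm{range}(P_t)=\mathcal{V}$ is a welcome addition rather than a deviation.
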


	\begin{proof}
		The first inequality of Eq.~\eqref{basic1} comes from $P_t\succeq \gamma_t U$,
		and the second one is obtained from $|c_t|\le1$ as follows:
		\[
		|\lbh_t^\top\ib_X|=
		|c_t \ib_{\Xt}^\top P_t^+\ib_{X}|
		\le
		\norm{\ib_{\Xt}} \norm{P_t^+} \norm{\ib_{X}}
		\le
		\frac{\bdx^2}{\gamma_t\lambda}.
		\]
		Eq.~\eqref{basic2} can be obtained as follows:
		\[
		\E_t[　\ib_{\Xt}^\top P_t^+ \ib_{\Xt}]
		=
		\E_t[　\tr( P_t^+ \ib_{\Xt}\ib_{\Xt}^\top)]
		=
		\tr(P_t^+P_t)
		\le d.
		\]
		The proof of Eq.~\eqref{basic3} is
		presented in~\cite[Lemma 14]{cesa2012comband}.
		Finally, Eq.~\eqref{basic4} is obtained with Eq.~\eqref{basic3}
		as follows:
		\[
		\E_t[　\lbh_t^\top \ib_X]=
		\E_t[\lb_t^\top\ib_\Xt\ib_{\Xt}P_t^+\ib_X]=
		\lb_t^\top P_tP_t^+\ib_X=
		\lb_t^\top\ib_X.
		\]
	\end{proof}

	\section{Proof for the high-probability regret bound}\label{supp:3}
	We show the complete proof of Theorem~\ref{thm:highprob}.
	Below is a detailed statement of the theorem.

	\begin{thm}\label{thm:highprobdetail}
		The sequence of super arms $\{\Xt\}_{t\in[T]}$
        obtained by
        \combandms$(\a=3,\Fl)$ satisfies the following
        inequality
        for any $X\in\Fl$ with probability at least $1-\delta$:
		\[
		\sum_{t=1}^T (\lb_t^\top \ib_\Xt - \lb_t^\top \ib_X)
		\le
		\bigg(\frac{3d(e-2)\lambda}{4\bdx^2}+\frac{3}{2}+\bdx\sqrt{\frac{7}{\lambda}\ln \frac{K+2}{\delta}}\bigg)T^{2/3} + o(T^{2/3}).
		\]
	\end{thm}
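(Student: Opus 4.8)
The plan is to bound the regret by splitting the comparison between the player's realized loss $\sum_t \lb_t^\top\ib_{\Xt}$ and the benchmark $\sum_t \lb_t^\top\ib_X$ into three pieces, each handled by a different tool. First I would pass from the true losses to the estimated losses $\lbh_t$: write
\[
\sum_{t=1}^T(\lb_t^\top\ib_{\Xt}-\lb_t^\top\ib_X)
=\sum_{t=1}^T(\lb_t^\top\ib_{\Xt}-\lbh_t^\top\ib_{\Xt})
+\sum_{t=1}^T(\lbh_t^\top\ib_{\Xt}-\lbh_t^\top\ib_X)
+\sum_{t=1}^T(\lbh_t^\top\ib_X-\lb_t^\top\ib_X).
\]
The first and third sums are sums of martingale differences once we subtract their conditional means, because $\E_t[\lbh_t^\top\ib_X]=\lb_t^\top\ib_X$ by Eq.~\eqref{basic4}. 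The third sum is the delicate one: it is a scalar random walk whose increments are bounded by $\bdx^2/(\gamma_t\lambda)$ (Eq.~\eqref{basic1}), but this bound grows like $t^{1/\a}$, so a naive Azuma step gives the wrong rate. Here I would invoke Bennett's inequality (Theorem~\ref{thm:bennett}): the per-step conditional variance $\vart{\lbh_t^\top\ib_X}$ is small — roughly $\E_t[(\lbh_t^\top\ib_X)^2]=\lb_t^\top\ib_\Xt\cdot\ib_X^\top P_t^+\ib_X\cdot(\text{something}\le 1)$, which one bounds using $\norm{P_t^+}\le 1/(\gamma_t\lambda)$ and $\norm{\ib_X}^2\le\bdx^2$, giving $O(\bdx^2/(\gamma_t\lambda))$ per round and hence $\sum_t\vart{\cdot}=O(\bdx^2 T^{1+1/\a}/\lambda)$. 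With $\a=3$ this is $O(\bdx^2 T^{4/3}/\lambda)$, so the $\sqrt{2v\ln(1/\delta)}$ term in Bennett contributes $O(\bdx\sqrt{\ln(1/\delta)/\lambda}\,T^{2/3})$ and the $\frac{b}{3}\ln(1/\delta)$ term, with $b=O(\bdx^2 T^{1/3}/\lambda)$, contributes a lower-order $o(T^{2/3})$ term. Since we need this simultaneously for all $X\in\Fl$, I would union-bound over the $K=|\Fl|$ comparators and over the ``variance is large'' bad event, which is where the $\ln\frac{K+2}{\delta}$ factor and the constant $7$ come from (the paper flags that this part corrects Lemma~\ref{lem:highprob1}). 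The first sum, $\sum_t(\lb_t^\top\ib_{\Xt}-\lbh_t^\top\ib_{\Xt})$, is analogously a martingale but note $\E_t[\lbh_t^\top\ib_{\Xt}]$ is not simply $\lb_t^\top\ib_{\Xt}$; instead one uses $\E_t[\ib_{\Xt}^\top P_t^+\ib_{\Xt}]\le d$ (Eq.~\eqref{basic2}) to control it, and Azuma or a crude bound suffices since the increments are already $O(1)$-ish after this identity — this is the source of the $\frac{3d(e-2)\lambda}{4\bdx^2}T^{2/3}$ term.

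Next I would bound the middle sum $\sum_t(\lbh_t^\top\ib_{\Xt}-\lbh_t^\top\ib_X)$, which is the ``pseudo-regret'' of the exponential-weights scheme run on the surrogate losses $\lbh_t$. Because the player plays $p_t=(1-\gamma_t)q_t+\gamma_t u$, I would first replace $\ib_{\Xt}$ by its expectation under $p_t$ at the cost of another martingale term, then compare $\sum_t\sum_X p_t(X)\lbh_t^\top\ib_X$ to $\sum_t\lbh_t^\top\ib_X$. The mixing weight $\gamma_t u$ costs at most $\gamma_t\cdot(\text{range of }\lbh_t)=O(\gamma_t\cdot\bdx^2/(\gamma_t\lambda))=O(\bdx^2/\lambda)$ per round, i.e.\ $O(\bdx^2 T/\lambda)$ — wait, that is linear, so one must be more careful and note the range times the mixing probability $\gamma_t$ telescopes with the learning rate; the standard COMBAND analysis shows this contributes $O(\bdx^2\ln K/\lambda\cdot\eta_t^{-1}$-type terms summed), which with the schedule $\eta_t,\gamma_t\propto t^{-1/\a}$ gives $O(T^{2/3})$. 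The remaining $q_t$ part is the classic Hedge/exponential-weights regret bound; the non-standard ingredient is the weight modification in Step~8, $\wt_{t+1,i}=w_{t+1,i}^{\eta_{t+1}/\eta_t}$, which is exactly what makes the time-varying-$\eta_t$ analysis go through cleanly (Eqs.~\eqref{eq:wtsum}–\eqref{eq:qtsum} give $q_t(X)\propto\exp(-\eta_t\sum_{t'<t}\lbh_{t'}^\top\ib_X)$). I would run the standard potential argument with $\Phi_t=\frac{1}{\eta_t}\ln\sum_X u(X)\exp(-\eta_t\sum_{t'<t}\lbh_{t'}^\top\ib_X)$, bound $\Phi_1-\Phi_{T+1}$ above by $\frac{\ln K}{\eta_{T+1}}$ plus telescoping correction terms coming from the change in $\eta_t$ (these are nonpositive or controlled because $\eta_t$ decreases), and bound each per-round drop using $e^{-x}\le 1-x+(e-2)x^2$ for $x\ge -1$ — legitimate since $\eta_t|\lbh_t^\top\ib_X|\le 1$ by the choice $\eta_t=\gamma_t\lambda/\bdx^2$ and Eq.~\eqref{basic1}. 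The second-order term yields $\sum_t\eta_t(e-2)\sum_X q_t(X)(\lbh_t^\top\ib_X)^2$, whose expectation-under-$q_t$ of $(\lbh_t^\top\ib_X)^2$ is $\lb_t^\top\ib_\Xt\cdot\tr(P_t^+Q_t)\le$ (using $Q_t\preceq P_t/(1-\gamma_t)$) $\le d/(1-\gamma_t)$, again by Eq.~\eqref{basic2}; summing $\sum_t\eta_t d\approx \frac{\lambda d}{2\bdx^2}\sum_t t^{-1/3}=O(\frac{\lambda d}{\bdx^2}T^{2/3})$ recovers the first displayed term with the explicit constant $\frac{3d(e-2)\lambda}{4\bdx^2}$ (the $3/4$ is $\int_0^1 x^{-1/3}=3/2$ halved).

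Finally I would collect the three contributions: $O(\frac{\lambda d}{\bdx^2}T^{2/3})$ from the second-order Hedge term, the $\frac{3}{2}$ and $\bdx\sqrt{\frac{7}{\lambda}\ln\frac{K+2}{\delta}}\,T^{2/3}$ from the Bennett step on the estimation-error walk plus the $\frac{\ln K}{\eta_{T+1}}$ leading term of the potential bound, and absorb the mixing-exploration cost $\gamma_t$ and the lower-order $b\ln(1/\delta)$, the martingale Azuma terms with $O(1)$-bounded increments, and the correction terms from the $\eta_t$ schedule into $o(T^{2/3})$. The union bound over $K$ comparators and the one extra ``large variance'' event explains $K+2$ inside the logarithm. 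The step I expect to be the main obstacle is the Bennett application: one must correctly identify the supermartingale structure, verify the one-sided boundedness $Z_t\le b$ with the right $b=O(\bdx^2 T^{1/3}/\lambda)$, get a tight enough bound on $\sum_t\vart{\lbh_t^\top\ib_X}$ so that $v=\Theta(\bdx^2 T^{4/3}/\lambda)$ (not worse), and chase the constants so the $\sqrt{2v\ln(1/\delta)}$ term lands at exactly $\bdx\sqrt{7/\lambda}\,T^{2/3}$ after the union bound — this is precisely the place the paper says the cited proof of \cite{braun2016efficient} had an error, so extra care is warranted there.
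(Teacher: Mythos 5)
Your overall architecture matches the paper's: a three-way split through the estimated losses $\lbh_t$, a decreasing-learning-rate Hedge potential argument with $e^{-x}\le 1-x+(e-2)x^2$ for the comparator term, Bennett's inequality for the estimation-error martingales, Azuma for the second-order term $\sum_t \eta_t\ib_{\Xt}^\top P_t^+\ib_{\Xt}/(1-\gamma_t)$, and a union bound over the $K$ comparators (the paper's $K+2$ is exactly $K$ events for the third sum plus one each for the other two). However, there are two genuine gaps in how you handle the first piece. First, your decomposition routes through $\lbh_t^\top\ib_{\Xt}$ rather than the paper's $\lbh_t^\top\xbt_t$ with $\xbt_t=\sum_{X}q_t(X)\ib_X$. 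If you bound $\sum_t(\lb_t^\top\ib_{\Xt}-\lbh_t^\top\ib_{\Xt})$ on its own by Bennett, the relevant conditional second moment is $\E_t[(\lbh_t^\top\ib_{\Xt})^2]=\E_t[c_t^2(\ib_{\Xt}^\top P_t^+\ib_{\Xt})^2]\le \frac{\bdx^2}{\gamma_t\lambda}\E_t[\ib_{\Xt}^\top P_t^+\ib_{\Xt}]\le \frac{d\bdx^2}{\gamma_t\lambda}$, which costs an extra $\sqrt{d}$ in the $T^{2/3}\sqrt{\ln(1/\delta)}$ coefficient relative to the theorem; the paper avoids this because $(\lbh_t^\top\xbt_t)^2\le \ib_{\Xt}^\top P_t^+ Q_t P_t^+\ib_{\Xt}\cdot c_t^2\le \bdx^2/(\lambda\gamma_t(1-\gamma_t))$ with no $d$. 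Your remark about re-centering on the expectation under $p_t$ could repair this, but only if the two pieces are recombined into $\sum_t(\lb_t^\top\ib_{\Xt}-\lbh_t^\top\xbt_t)$ before applying Bennett, which your sketch does not actually do; relatedly, you attribute the $\frac{3d(e-2)\lambda}{4\bdx^2}T^{2/3}$ term both to this first sum and to the Hedge second-order term, and it arises only once (from the latter). The claim that the first sum's increments are ``$O(1)$-ish'' is also false: $|\lbh_t^\top\ib_{\Xt}|$ can be as large as $\bdx^2/(\gamma_t\lambda)=\Theta(t^{1/3}\bdx^2/\lambda)$.

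Second, your treatment of the exploration cost is not correct as stated. You compute it as (range of $\lbh_t$)$\times\gamma_t=O(\bdx^2/\lambda)$ per round, notice this is linear in $T$, and propose that it ``telescopes with the learning rate''---it does not. The actual resolution in the paper is that the bias term enters through the \emph{true} losses: $\E_t[\lb_t^\top\ib_{\Xt}-\lbh_t^\top\xbt_t]=\lb_t^\top\xbb_t-\lb_t^\top\xbt_t=\gamma_t\lb_t^\top(\zb-\xbt_t)$ with $\zb$ the uniform mean, and since $|\lb_t^\top\ib_X|\le 1$ this is at most $2\gamma_t$ per round, summing to $\sum_t t^{-1/3}\le\frac{3}{2}T^{2/3}$---which is precisely where the $\frac{3}{2}$ in the theorem comes from (you instead attribute it to the Bennett step). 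These are fixable bookkeeping errors rather than a wrong strategy, but as written the proposal would not reach the stated constants, and the exploration-cost argument as described would fail.
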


	Let $\xbt_t:=\sum_{X\in\Fl} q_t(X)\ib_{X}$.
	As in~\cite{braun2016efficient}, the proof
	is obtained by bounding each term on the right hand side of the following equation:
	\begin{align*}
	\sum_{t=1}^T (\lb_t^\top \ib_\Xt - \lb_t^\top \ib_X)
	=\sum_{t=1}^T (\lb_t^\top \ib_\Xt-\lbh_t^\top \xbt_t)
	+ \sum_{t=1}^T (\lbh_t^\top \xbt_t-\lbh_t^\top \ib_X)
	+ \sum_{t=1}^T (\lbh_t^\top \ib_X - \lb_t^\top\ib_X),
	\end{align*}
	where $X\in\Fl$ is an arbitrary super arm.
	To bound them,
	we prove the following three lemmas.


	\begin{lem}\label{lem:highprob1}
		For any $X\in\Fl$, we have
		\begin{align*}
		\sum_{t=1}^T
		(\lbh_t^\top\xbt_t
		-\lbh_t^\top\ib_{X})
		\le
		\frac{\ln K}{\eta_T}
		+
		(e-2)
		\Bigg(
		d\sum_{t=1}^T \frac{ \eta_t }{1-\gamma_t} +
		\frac{\bdx^2}{\lambda} \sqrt{\frac{1}{2}\ln\frac{1}{\delta}\sum_{t=1}^T\frac{\eta_{t}^2}{\gamma_t^2(1-\gamma_t)^2} }
		\Bigg)
		\end{align*}
		with probability at least $1-\delta$.
	\end{lem}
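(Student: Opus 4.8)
The plan is to read Lemma~\ref{lem:highprob1} as an exponential-weights regret bound for the distributions $q_t$ against an arbitrary fixed super arm $X\in\Fl$, measured in the estimated losses $\lbh_t$, followed by a single martingale-concentration step that converts an \emph{expected} second-moment quantity into the stated high-probability expression. Since the only randomness used will be in that last step, the event of probability $\ge 1-\delta$ comes entirely from Azuma--Hoeffding.

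First I would prove the deterministic inequality
\[
\sum_{t=1}^T\bigl(\lbh_t^\top\xbt_t-\lbh_t^\top\ib_X\bigr)
\le
\frac{\ln K}{\eta_T}+(e-2)\sum_{t=1}^T\eta_t\sum_{X'\in\Fl}q_t(X')\,(\lbh_t^\top\ib_{X'})^2 ,
\]
valid for every $X\in\Fl$. By \eqref{basic1} we have $|\lbh_t^\top\ib_{X'}|\le\bdx^2/(\gamma_t\lambda)$, and the settings $\eta_t=\lambda t^{-1/\a}/(2\bdx^2)$, $\gamma_t=t^{-1/\a}/2$ give $\eta_t\bdx^2/(\gamma_t\lambda)=1/\bdx^2\le1$, so $\eta_t|\lbh_t^\top\ib_{X'}|\le1$ for all $X',t$. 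I would then introduce the mix-loss potential $\Phi_t:=\frac{1}{\eta_t}\ln\bigl(\frac1K\sum_{X'\in\Fl}\exp(-\eta_t\sum_{s=1}^{t-1}\lbh_s^\top\ib_{X'})\bigr)$ --- recall that $q_t(X')$ is exactly the normalization of $\exp(-\eta_t\sum_{s<t}\lbh_s^\top\ib_{X'})$ by \eqref{eq:qtsum} --- and telescope from $\Phi_1=0$ down to a final potential evaluated at rate $\eta_T$ (legitimate because $q_t$ is only needed for $t\le T$, which is why $\eta_T$ rather than $\eta_{T+1}$ appears). Each increment decomposes into a one-step ``loss'' part, controlled by $e^{-z}\le1-z+(e-2)z^2$ for $|z|\le1$ together with $\ln(1+u)\le u$, and a ``rate-change'' part stemming from $\eta_{t+1}\le\eta_t$; the re-exponentiation $\tlweight{t+1}{i}=\weight{t+1}{i}^{\eta_{t+1}/\eta_t}$ in Step~8 of Algorithm~\ref{alg:combandalt} is exactly what keeps $\Phi_t$ well defined and makes the rate change enter only through $\eta_{t+1}\le\eta_t$.

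Next I would bound the second-moment term. Since $\lbh_t=c_tP_t^+\ib_{\Xt}$ with $|c_t|\le1$ and $P_t^+$ symmetric, and $\sum_{X'}q_t(X')\ib_{X'}\ib_{X'}^\top=Q_t$, one gets $\sum_{X'}q_t(X')(\lbh_t^\top\ib_{X'})^2\le\ib_{\Xt}^\top P_t^+Q_tP_t^+\ib_{\Xt}$. From $P_t=(1-\gamma_t)Q_t+\gamma_tU\succeq(1-\gamma_t)Q_t$ we have $Q_t\preceq\frac{1}{1-\gamma_t}P_t$, so conjugating by $P_t^+$ and using $P_t^+P_tP_t^+=P_t^+$ yields
\[
\sum_{X'\in\Fl}q_t(X')\,(\lbh_t^\top\ib_{X'})^2\le\frac{1}{1-\gamma_t}\,\ib_{\Xt}^\top P_t^+\ib_{\Xt}.
\]
Writing $\ib_{\Xt}^\top P_t^+\ib_{\Xt}=\E_t[\ib_{\Xt}^\top P_t^+\ib_{\Xt}]+Z_t$, the first summand is $\le d$ by \eqref{basic2}, producing the term $(e-2)d\sum_t\eta_t/(1-\gamma_t)$. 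The sequence $\{\frac{\eta_t}{1-\gamma_t}Z_t\}_t$ is a martingale difference sequence, and since $0\le\ib_{\Xt}^\top P_t^+\ib_{\Xt}\le\norm{P_t^+}\bdx^2\le\bdx^2/(\gamma_t\lambda)$ by \eqref{basic1}, each $\frac{\eta_t}{1-\gamma_t}Z_t$ lies in an interval of length $\eta_t\bdx^2/(\gamma_t\lambda(1-\gamma_t))$; Azuma--Hoeffding (Theorem~\ref{thm:azuma}) then bounds $(e-2)\sum_t\frac{\eta_t}{1-\gamma_t}Z_t\le(e-2)\frac{\bdx^2}{\lambda}\sqrt{\frac12\ln\frac1\delta\sum_t\eta_t^2/(\gamma_t^2(1-\gamma_t)^2)}$ with probability at least $1-\delta$. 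Adding the $\ln K/\eta_T$ term and the two contributions gives the claim.

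I expect the main obstacle to be the rate-change piece of the telescoping: one must verify that passing from learning rate $\eta_t$ to $\eta_{t+1}\le\eta_t$ moves $\Phi_t$ in the favorable direction with no extra additive loss, so that the telescoped sum closes with only the leading $\ln K/\eta_T$. This reduces to the monotonicity of $\Phi_t$ in the learning rate, which follows from Jensen's inequality applied to the concave map $x\mapsto x^{\eta_{t+1}/\eta_t}$; this is precisely the point of the weight-modification step and the place where the argument of \cite{braun2016efficient} has to be repaired, so I would state and check it explicitly. The remaining steps are routine Loewner-order manipulations built on Lemma~\ref{lem:basic} together with the one application of Theorem~\ref{thm:azuma}.
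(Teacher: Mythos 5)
Your proposal is correct and follows essentially the same route as the paper's proof: the same potential/telescoping argument for the changing learning rate (your Jensen step on the concave map $x\mapsto x^{\eta_{t+1}/\eta_t}$ is exactly the paper's H\"older inequality $\norm{\xb}_s\ge K^{1/s-1/r}\norm{\xb}_r$), the same bound $\sum_{X'}q_t(X')(\lbh_t^\top\ib_{X'})^2\le\frac{1}{1-\gamma_t}\ib_{\Xt}^\top P_t^+\ib_{\Xt}$, and the same Azuma--Hoeffding step on $\frac{\eta_t}{1-\gamma_t}(\ib_{\Xt}^\top P_t^+\ib_{\Xt}-\E_t[\ib_{\Xt}^\top P_t^+\ib_{\Xt}])$. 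The only blemish is the harmless arithmetic slip $\eta_t\bdx^2/(\gamma_t\lambda)=1/\bdx^2$; the correct value is exactly $1$, which is all that is needed for $e^{-z}\le 1-z+(e-2)z^2$.
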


	\begin{proof}
    With the weight values 
    $\weight{t}{i}\ (i\in E)$ used in 
    Algorithm~\ref{alg:combandalt}, 
    we define $w_t(X):=\prod_{i\in X}w_{t,i}$ and 
    $W_t:=\sum_{X\in\Fl} w_t(X)$;  
	we measure the progress of the algorithm in each round 
    via $\ln({W_{t+1}^{\eta_t^{-1}}}/{W_{t}^{\eta_{t-
    1}^{-1}}})$.
		By H{\"o}lder's inequality,
		$\norm{\xb}_s\ge
		K^{\frac{1}{s}-\frac{1}{r}}\norm{\xb}_r$ holds
		for any $\xb\in\R^K$ and $0<r\le s$.
		Thus, letting $s=\eta_{t-1}/\eta_{t}$
		and $r=1$,
		we obtain
		\begin{align*}
		W_{t}^{\eta_{t-1}^{-1}}
		&=
		\bigg(\sum_{X\in\Fl}w_t(X)^{\frac{\eta_t}{\eta_{t-1}}\frac{\eta_{t-1}}{\eta_{t}}}\bigg)^{\frac{\eta_t}{\eta_{t-1}}{\eta_t^{-1}}}
		\ge
		K^{\frac{1}{\eta_{t-1}}-\frac{1}{\eta_{t}}}
		\bigg(\sum_{X\in\Fl}w_t(X)^{\frac{\eta_t}{\eta_{t-1}}}\bigg)^{\eta_{t}^{-1}}.
		\end{align*}
		Hence we have
		\begin{align*}
		\ln \frac{W_{t+1}^{\eta_{t}^{-1}}}{W_{t}^{\eta_{t-1}^{-1}}}
		-\ln K^{\frac{1}{\eta_{t}}-\frac{1}{\eta_{t-1}}}
		&\le
		\frac{1}{\eta_{t}}\ln \frac{W_{t+1}}{\sum_{X\in\Fl}w_t(X)^{\frac{\eta_t}{\eta_{t-1}}}}\\
		&=
		\frac{1}{\eta_{t}}\ln\sum_{X\in\Fl}\frac{ w_{t}(X)^{\frac{\eta_t}{\eta_{t-1}}} \exp(-\eta_t\lbh_t^\top\ib_{X})}{\sum_{X'\in\Fl}w_t(X')^{\frac{\eta_t}{\eta_{t-1}}}} \\
		&=
		\frac{1}{\eta_t}\ln
		\sum_{X\in\Fl}
		q_{t}(X)\exp(-\eta_t\lbh_t^\top\ib_{X}) \\
		&\le
		\frac{1}{\eta_t}\ln
		\sum_{X\in\Fl}q_{t}(X)\left(1-\eta_t\lbh_t^\top\ib_{X}+
		(e-2)\eta_t^2(\lbh_t^\top\ib_{X})^2 \right)
		\\
		&=
		\frac{1}{\eta_{t}}\ln
		\left(1-\eta_t\lbh_t^\top\xbt_t+
		(e-2)\eta_t^2\sum_{X\in\Fl} q_t(X)
		(\lbh_t^\top\ib_{X})^2\right) \\
		&\le
		-\lbh_t^\top\xbt_t+
		(e-2)\eta_t\sum_{X\in\Fl} q_t(X)
		(\lbh_t^\top\ib_{X})^2,
		\end{align*}
		where the second inequality comes from
		$e^{-x} \le 1-x+(e-2)x^2$ for any $|x|\le1$;
		note that $\eta_t$ is defined to satisfy
		$\eta_t|\lbh_t^\top\ib_{X}|\le \eta_t \bdx^2/(\gamma_t\lambda)=1$.
		The third inequality is obtained by $\ln(1+x)\le x$ for any $x\ge-1$.
		The second term on the right hand side is bounded
        from above as follows:
		\begin{align*}
		\sum_{X\in\Fl}q_{t}(X)(\lbh_t^\top\ib_{X})^2
		&\le \sum_{X\in\Fl}\frac{p_{t}(X)}{1-\gamma_t}(\lbh_t^\top\ib_{X})^2
		\le \frac{\ib_{\Xt}^\top P_t^+ \ib_{\Xt}}{1-\gamma_t}.
		\end{align*}
        Therefore, we have
		\begin{align*}
        \frac{1}{\eta_t}
        \ln W_{t+1} -
        \frac{1}{\eta_{t-1}}
        \ln W_{t}
		&\le
        \bigg(\frac{1}{\eta_t}-\frac{1}{\eta_{t-1}}\bigg)
        \ln K
		-\lbh_t^\top\xbt_t+
		(e-2)\eta_t
        \frac{\ib_{\Xt}^\top P_t^+ \ib_{\Xt}}{1-\gamma_t}.
		\end{align*}
		Summing up both sides of the above 
        for $t=1,\dots,T$, we obtain
		the following inequality from $W_1=K$:
		\begin{align*}
		\frac{1}{\eta_T}\ln W_{T+1}
		\le
        \frac{1}{\eta_T}\ln K
		-\sum_{t=1}^T \lbh_t^\top\xbt_t+
		(e-2)
		\sum_{t=1}^T \eta_t\frac{\ib_{\Xt}^\top P_t^+ \ib_{\Xt}}{1-\gamma_t}.
		\end{align*}

		On the other hand,
		we have $\weight{T+1}{i} =
		\exp\big(-\eta_T \sum_{t=1}^T \hloss{t}{i}\big)$ by Eq.~\eqref{eq:wtsum}.
		Thus the following holds for any $X\in\Fl$:
		\begin{align*}
		\frac{1}{\eta_T}\ln W_{T+1}
		\ge \frac{1}{\eta_T} \ln w_{T+1}(X)
		= -\sum_{t=1}^T \lbh_t^\top\ib_{X}.
		\end{align*}
		Therefore, we obtain
		\begin{equation}\label{ineq:lemtwokey}
		\sum_{t=1}^T
		(\lbh_t^\top\xbt_t
		- \lbh_t^\top\ib_{X})
		\le
		\frac{\ln K}{\eta_T}
		+
		(e-2)
		\sum_{t=1}^T \eta_t\frac{\ib_{\Xt}^\top P_t^+ \ib_{\Xt}}{1-\gamma_t}.
		\end{equation}
		The second term on the right hand side
		can be bounded from above
		by using the Azuma--Hoeffding inequality~(Theorem~\ref{thm:azuma}) for
		the martingale difference sequence
		$\frac{\eta_t}{1-\gamma_t}
		(\ib_{\Xt}^\top P_t^+ \ib_{\Xt}-\E_t[\ib_{\Xt}^\top P_t^+ \ib_{\Xt}])$ as follows.
		First, note that we have
		\[
		\E_t[　\ib_{\Xt}^\top P_t^+ \ib_{\Xt}]
		\le d
		\quad \text{and} \quad
		0\le \frac{\eta_t \ib_{\Xt}^\top P_t^+ \ib_{\Xt}}{1-\gamma_t}
		\le \frac{\eta_t\bdx^2}{(1-\gamma_t)\gamma_t\lambda}
		\]
		by Lemma~\ref{lem:basic}.
		Thus, by the Azuma-Hoeffding inequality,
		the following holds
		with probability at least $1-\delta$:
		\[
		\sum_{t=1}^T \frac{\eta_t \ib_{\Xt}^\top P_t^+ \ib_{\Xt}}{1-\gamma_t}
		\le
		d\sum_{t=1}^T \frac{ \eta_t }{1-\gamma_t} +
		\frac{\bdx^2}{\lambda} \sqrt{\frac{\ln(1/\delta)}{2}\sum_{t=1}^T\frac{\eta_{t}^2}{\gamma_t^2(1-\gamma_t)^2} }.
		\]
		Hence we obtain
		\[
		\sum_{t=1}^T
		(\lbh_t^\top\xbt_t
		- \lbh_t^\top\ib_{X})
		\le
		\frac{\ln K}{\eta_T}
		+
		(e-2)
		\\
		\Bigg(
		d\sum_{t=1}^T \frac{ \eta_t }{1-\gamma_t} +
		\frac{\bdx^2}{\lambda} \sqrt{\frac{\ln(1/\delta)}{2}\sum_{t=1}^T\frac{\eta_{t}^2}{\gamma_t^2(1-\gamma_t)^2} }
		\Bigg).
		\]
	\end{proof}


	\begin{lem}\label{lem:highprob2}
    The following inequality holds with probability
    at least $1-\delta$:
		\begin{align*}
		\sum_{t=1}^T(\lb_t^\top\ib_{\Xt}-\lbh_t^\top\xbt_t) 
        \le &\ 
		2\sum_{t=1}^T\gamma_t +\frac{1}{3}\bigg(2+ \frac{\bdx}{\sqrt{\lambda\gamma_T(1-\gamma_T)}}\bigg)\ln\frac{1}{\delta}
        \\
        &+\sqrt{2\bigg(T+\frac{3\bdx^2}{\lambda}\sum_{t=1}^T\frac{\gamma_t}{(1-\gamma_t)^2}\bigg)\ln\frac{1}{\delta} }.
		\end{align*}
	\end{lem}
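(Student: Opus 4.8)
The plan is to split $\sum_{t=1}^T(\lb_t^\top\ib_{\Xt}-\lbh_t^\top\xbt_t)$ into a deterministic ``exploration bias'' term, controlled by $\sum_t\gamma_t$, and a martingale term to which Bennett's inequality (Theorem~\ref{thm:bennett}) applies; the whole argument rests on Lemma~\ref{lem:basic} together with one extra estimate, $\xbt_t^\top P_t^+\xbt_t\le 1/(1-\gamma_t)$.

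First I would set $\bar u:=\E_{X\sim u}[\ib_X]$ and, recalling $p_t(X)=(1-\gamma_t)q_t(X)+\gamma_t u(X)$, define
\[
Z_t:=(\lb_t^\top\ib_{\Xt}-\lbh_t^\top\xbt_t)-\gamma_t\lb_t^\top(\bar u-\xbt_t)=\lb_t^\top\ib_{\Xt}+\gamma_t\lb_t^\top(\xbt_t-\bar u)-\lbh_t^\top\xbt_t.
\]
Averaging Eq.~\eqref{basic4} over $X\sim q_t$ gives $\E_t[\lbh_t^\top\xbt_t]=\lb_t^\top\xbt_t$, and since $\E_t[\ib_{\Xt}]=(1-\gamma_t)\xbt_t+\gamma_t\bar u$ one checks $\E_t[Z_t]=0$, so $\{Z_t\}$ is a martingale difference sequence for the natural filtration. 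As $|\lb_t^\top\ib_X|\le1$ for all $X\in\Fl$, the bias term obeys $\gamma_t\lb_t^\top(\bar u-\xbt_t)\le2\gamma_t$, whence $\sum_t(\lb_t^\top\ib_{\Xt}-\lbh_t^\top\xbt_t)\le\sum_tZ_t+2\sum_t\gamma_t$, and it remains to bound $\sum_tZ_t$ with probability $1-\delta$.

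Bennett's inequality needs a uniform upper bound $Z_t\le b$ and a deterministic bound $\sum_t\vart{Z_t}\le v$. For $b$: the three pieces of $Z_t$ are at most $\lb_t^\top\ib_{\Xt}\le1$, $\gamma_t\lb_t^\top(\xbt_t-\bar u)\le2\gamma_t\le1$, and, using $|c_t|\le1$ and Cauchy--Schwarz in the $P_t^+$ inner product,
\[
-\lbh_t^\top\xbt_t\le|\ib_{\Xt}^\top P_t^+\xbt_t|\le\sqrt{\ib_{\Xt}^\top P_t^+\ib_{\Xt}}\;\sqrt{\xbt_t^\top P_t^+\xbt_t}\le\frac{\bdx}{\sqrt{\lambda\gamma_t(1-\gamma_t)}},
\]
where I bound the first factor by $\norm{P_t^+}\,\norm{\ib_{\Xt}}^2\le\bdx^2/(\gamma_t\lambda)$ via Lemma~\ref{lem:basic} and the second by the extra estimate below; since $\gamma\mapsto\gamma(1-\gamma)$ is increasing on $(0,\tfrac12]$ and $\gamma_t$ is non-increasing, this is largest at $t=T$, so $Z_t\le b:=2+\bdx/\sqrt{\lambda\gamma_T(1-\gamma_T)}$. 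For $v$: the bias term is $\mathcal F_{t-1}$-measurable, so $\vart{Z_t}=\vart{\lb_t^\top\ib_{\Xt}-\lbh_t^\top\xbt_t}\le\E_t[c_t^2(1-\ib_{\Xt}^\top P_t^+\xbt_t)^2]\le\E_t[(1-\ib_{\Xt}^\top P_t^+\xbt_t)^2]$. Expanding with $\E_t[\ib_{\Xt}^\top P_t^+\xbt_t]=(1-\gamma_t)\xbt_t^\top P_t^+\xbt_t+\gamma_t\xbt_t^\top P_t^+\bar u$ and $\E_t[(\ib_{\Xt}^\top P_t^+\xbt_t)^2]=\xbt_t^\top P_t^+P_tP_t^+\xbt_t=\xbt_t^\top P_t^+\xbt_t$ gives, for $\gamma_t\le\tfrac12$,
\[
\E_t[(1-\ib_{\Xt}^\top P_t^+\xbt_t)^2]=1+(2\gamma_t-1)\xbt_t^\top P_t^+\xbt_t-2\gamma_t\xbt_t^\top P_t^+\bar u\le1+2\gamma_t|\xbt_t^\top P_t^+\bar u|,
\]
and bounding $|\xbt_t^\top P_t^+\bar u|$ exactly as the off-diagonal term above yields $\vart{Z_t}\le1+3\bdx^2\gamma_t/(\lambda(1-\gamma_t)^2)$ (one may instead keep the weaker $\sqrt{\cdot}$ form without affecting the final rate), hence $\sum_t\vart{Z_t}\le v:=T+\tfrac{3\bdx^2}{\lambda}\sum_t\tfrac{\gamma_t}{(1-\gamma_t)^2}$. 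Applying Theorem~\ref{thm:bennett}, the variance alternative is excluded by this bound, so $\sum_tZ_t\le\tfrac{b}{3}\ln\tfrac1\delta+\sqrt{2v\ln\tfrac1\delta}$ with probability $\ge1-\delta$; adding $2\sum_t\gamma_t$ gives exactly the claimed inequality.

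The one nonstandard ingredient, and the step I expect to need the most care because of pseudo-inverse range issues, is $\xbt_t^\top P_t^+\xbt_t\le(1-\gamma_t)^{-1}$, which I would isolate as a short lemma. It follows from $P_t\succeq(1-\gamma_t)Q_t$ and $\xbt_t\in\Im(Q_t)$: comparing the variational formula $v^\top P^+v=\max_u(u^\top v)^2/(u^\top Pu)$ for $P_t$ and $Q_t$ — the numerator vanishing when $u\perp\Im(Q_t)$ — yields $v^\top P_t^+v\le(1-\gamma_t)^{-1}v^\top Q_t^+v$ for $v\in\Im(Q_t)$, and $\xbt_t^\top Q_t^+\xbt_t=\max_u(\E_{X\sim q_t}[u^\top\ib_X])^2/\E_{X\sim q_t}[(u^\top\ib_X)^2]\le1$ by Jensen. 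The remaining ingredients — the identity $P_t^+P_tP_t^+=P_t^+$, the moment expansions of $\ib_{\Xt}^\top P_t^+\xbt_t$, the bounds $\norm{P_t^+}\le1/(\gamma_t\lambda)$ and $\norm{\ib_X}\le\bdx$ from Lemma~\ref{lem:basic}, and the replacement of $\gamma_t$ by $\gamma_T$ via monotonicity — are routine.
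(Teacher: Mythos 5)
Your decomposition is the same as the paper's: the martingale difference $Z_t$ you define is exactly the paper's $Y_t$, the bias term is handled identically via $|\lb_t^\top(\xbt_t-\bar u)|\le 2$, the upper bound $b=2+\bdx/\sqrt{\lambda\gamma_T(1-\gamma_T)}$ is obtained by the same route (your Cauchy--Schwarz in the $P_t^+$ inner product plus $\xbt_t^\top P_t^+\xbt_t\le(1-\gamma_t)^{-1}$ is equivalent to the paper's $\xbt_t\xbt_t^\top\preceq Q_t\preceq(1-\gamma_t)^{-1}P_t$), and Bennett's inequality is applied in the same way. The one point of divergence is the variance bound, and there your argument has a genuine gap.

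After expanding $\E_t[(1-\ib_{\Xt}^\top P_t^+\xbt_t)^2]=1+(2\gamma_t-1)\xbt_t^\top P_t^+\xbt_t-2\gamma_t\,\xbt_t^\top P_t^+\bar u$ and discarding the nonpositive middle term, you are left with $2\gamma_t|\xbt_t^\top P_t^+\bar u|$, and you claim this is at most $3\bdx^2\gamma_t/(\lambda(1-\gamma_t)^2)$ by ``bounding exactly as the off-diagonal term above.'' That does not work: Cauchy--Schwarz with your estimates $\xbt_t^\top P_t^+\xbt_t\le(1-\gamma_t)^{-1}$ and $\bar u^\top P_t^+\bar u\le\bdx^2/(\gamma_t\lambda)$ gives $2\gamma_t|\xbt_t^\top P_t^+\bar u|\le 2\bdx\sqrt{\gamma_t}/\sqrt{\lambda(1-\gamma_t)}$, which is of order $\sqrt{\gamma_t}$, not $\gamma_t$; the crude operator-norm bound $|\xbt_t^\top P_t^+\bar u|\le\bdx^2/(\gamma_t\lambda)$ is even worse, giving an $O(1)$ term. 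Neither is dominated by $3\bdx^2\gamma_t/(\lambda(1-\gamma_t)^2)$ for small $\gamma_t$, so the stated variance bound --- and hence the exact inequality of the lemma --- is not established. The paper sidesteps this by first substituting $\xbt_t=(\xbb_t-\gamma_t\zb)/(1-\gamma_t)$ with $\xbb_t=\E_t[\ib_{\Xt}]$, which regroups the same identity as $1-\tfrac{1-2\gamma_t}{(1-\gamma_t)^2}\xbb_t^\top P_t^+\xbb_t+\tfrac{\gamma_t^2}{(1-\gamma_t)^2}(\zb-2\xbb_t)^\top P_t^+\zb$; dropping the (larger) nonpositive term leaves a cross term with a $\gamma_t^2$ prefactor, so even the crude bound $\norm{P_t^+}\le 1/(\gamma_t\lambda)$ nets the required $O(\gamma_t)$. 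In effect you discard a smaller nonpositive quantity than the paper does and pay for it in the surviving cross term. Your fallback remark --- keeping the $O(\sqrt{\gamma_t})$ form --- does preserve the final $O(T^{2/3})$ regret rate since $\sum_t\sqrt{\gamma_t}=o(T)$, but it proves a different inequality from the one stated in the lemma.
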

    \begin{proof}
		Let $\zb:=\sum_{X\in\Fl}\mu \ib_X$ and
		$\xbb_t:=\E_t[\ib_{\Xt}]=(1-\gamma_t)\xbt_t +\gamma_t \zb$.
		We obtain the proof by 
		using 
		Bennett's inequality~(Theorem~\ref{thm:bennett}) for the martingale difference sequence
		\begin{align*}
		Y_t
		:=&\
		\lb_t^\top \ib_{\Xt} - \lbh_t^\top \xbt_t
		-
		\E_t[\lb_t^\top \ib_{\Xt} - \lbh_t^\top \xbt_t] \\
		=&\
		\lb_t^\top \ib_{\Xt} - \lbh_t^\top \xbt_t
		-
		\lb_t^\top \xbb_t + \lb_t^\top \xbt_t \\
		=&\
		\lb_t^\top \ib_{\Xt} - \lbh_t^\top \xbt_t
		+
		\gamma_t\lb_t^\top(\xbt_t-\zb).
		\end{align*}
        We first bound the values of 
        $|Y_t|$ and $\vart{Y_t}$. 
		By
        $Q_t\preceq \frac{1}{1-\gamma_t}P_t$ and Jensen's inequality
		$\xbt_t\xbt_t^\top\preceq Q_t$,
        we have
		\[
		(\lbh_t^\top \xbt_t)^2
		\le
		c_t^2\ib_{\Xt}^\top P_t^+ Q_t P_t^+ \ib_{\Xt}
		\le
		\frac{\ib_{\Xt}^\top P_t^+ \ib_{\Xt}}{1-\gamma_t}
		\le
		\frac{\bdx^2}{\lambda\gamma_t(1-\gamma_t)},
		\]
		and hence
		\[
		|Y_t|
		\le 1+ \frac{\bdx}{\sqrt{\lambda\gamma_t(1-\gamma_t)}} + 2\gamma_t
		\le 2+ \frac{\bdx}{\sqrt{\lambda\gamma_T(1-\gamma_T)}}.
		\]
		The variance of $Y_t$ is bounded as follows:
		\begin{align*}
		\vart{Y_t}
		&\le
		\E_t[(\lb_t^\top \ib_{\Xt} - \lbh_t^\top \xbt_t)^2]
		=
		\E_t[c_t^2(1 - \ib_{\Xt}^\top P^+_t \xbt_t)^2]
		\le
		\E_t[(1 - \ib_{\Xt}^\top P^+_t \xbt_t)^2] \\
		&=
		\E_t[1 - 2 \ib_{\Xt}^\top P^+_t \xbt_t +
		\xbt_t^\top P^+_t \ib_{\Xt}
		\ib_{\Xt}^\top P^+_t \xbt_t] \\
		&=
		1 - 2 \xbb_t^\top P^+_t \xbt_t +
		\xbt_t^\top P^+_t  \xbt_t \\
		&=
		1 - \frac{2}{1-\gamma_t} \xbb_t^\top P^+_t (\xbb_t-\gamma_t\zb) +
		\frac{1}{(1-\gamma_t)^2}
		(\xbb_t-\gamma_t\zb)^\top P^+_t  (\xbb_t-\gamma_t\zb) \\
		&=
		1 - \frac{1-2\gamma_t}{(1-\gamma_t)^2} \xbb_t^\top P^+_t \xbb_t
		+ \frac{\gamma_t^2}{(1-\gamma_t)^2}
		(\zb - 2\xbb_t)^\top P^+_t  \zb \\
		&\le
        1 + \frac{\gamma_t^2}{(1-\gamma_t)^2}
		(\zb - 2\xbb_t)^\top P^+_t  \zb
         \\
		&\le
		1 + \frac{3\gamma_t \bdx^2}{(1-\gamma_t)^2\lambda},
		\end{align*}
		where the third inequality comes from
        $1-2\gamma_t\ge0$, and the last inequality is
        obtained by Lemma~\ref{lem:basic} with $\norm{\zb}\le L$ and
        $\norm{\xbb_t}\le L$.
		Therefore, by using Bennett's inequality,
		we obtain
		\begin{align*}
		\sum_{t=1}^T
		(\lb_t^\top \ib_{\Xt} - \lbh_t^\top \xbt_t
		+
		\gamma_t\lb_t^\top(\xbt_t-\zb))
		\le &\ 
		\frac{1}{3}\bigg(2+ \frac{\bdx}{\sqrt{\lambda\gamma_T(1-\gamma_T)}}\bigg)\ln\frac{1}{\delta}
        \\
        &+\sqrt{2\bigg(T+\frac{3\bdx^2}{\lambda}\sum_{t=1}^T\frac{\gamma_t}{(1-\gamma_t)^2}\bigg)\ln\frac{1}{\delta}}.
		\end{align*}
		The proof is completed by $|\lb_t^\top(\xbt_t-\zb)|\le 2$.
	\end{proof}

	\begin{lem}\label{lem:highprob3}
    The following inequality holds for
    all $X\in\Fl$ simultaneously with probability
    $1-\delta$:
		\begin{align*}
		\sum_{t=1}^T(\lbh_t^\top\ib_{X}-\lb_t^\top\ib_{X})\le
		\frac{1}{3}\bigg(1+\frac{\bdx^2}{\gamma_T \lambda} \bigg)\ln\frac{K}{\delta} +
		\sqrt{\frac{2\bdx^2}{\lambda}\ln\frac{K}{\delta}\sum_{t=1}^T\frac{1}{\gamma_t}  }.
		\end{align*}
	\end{lem}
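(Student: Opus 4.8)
The plan is to apply Bennett's inequality (Theorem~\ref{thm:bennett}) to control $\sum_{t=1}^T(\lbh_t^\top\ib_X - \lb_t^\top\ib_X)$ for each fixed $X\in\Fl$, and then take a union bound over the $K=|\Fl|$ super arms. For a fixed $X$, I would set $Z_t := \lbh_t^\top\ib_X - \lb_t^\top\ib_X$ and first check that $\{Z_t\}$ is a martingale difference sequence with respect to the natural filtration: by Eq.~\eqref{basic4} in Lemma~\ref{lem:basic}, $\E_t[\lbh_t^\top\ib_X]=\lb_t^\top\ib_X$, so $\E_t[Z_t]=0$. Next I would verify the one-sided boundedness required by Bennett: by Eq.~\eqref{basic1}, $|\lbh_t^\top\ib_X|\le \bdx^2/(\gamma_t\lambda)$, and since $|\lb_t^\top\ib_X|\le 1$ by the loss assumption and $\gamma_t$ is non-increasing in $t$, we get $Z_t \le 1 + \bdx^2/(\gamma_t\lambda) \le 1 + \bdx^2/(\gamma_T\lambda) =: b$. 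This $b$ matches the coefficient of $\ln(K/\delta)$ in the statement.

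The key remaining ingredient is the conditional-variance bound. I would estimate
\[
\var{Z_t\mid\mathcal{F}_{t-1}} \le \E_t[(\lbh_t^\top\ib_X)^2]
= \E_t[c_t^2\,\ib_X^\top P_t^+\ib_{\Xt}\ib_{\Xt}^\top P_t^+\ib_X]
\le \ib_X^\top P_t^+ P_t P_t^+ \ib_X = \ib_X^\top P_t^+\ib_X,
\]
using $|c_t|\le1$ and $P_t^+ P_t P_t^+ = P_t^+$. Then $\ib_X^\top P_t^+\ib_X \le \norm{P_t^+}\norm{\ib_X}^2 \le \bdx^2/(\gamma_t\lambda)$ by Eq.~\eqref{basic1}. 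Summing gives the deterministic variance budget $v := \sum_{t=1}^T \var{Z_t\mid\mathcal{F}_{t-1}} \le \frac{\bdx^2}{\lambda}\sum_{t=1}^T \frac{1}{\gamma_t}$, so the variance alternative in Theorem~\ref{thm:bennett} never triggers (or we simply plug this value of $v$ in). Applying Bennett with this $v$ and $b$, and then a union bound over $X\in\Fl$ (replacing $\delta$ by $\delta/K$), yields exactly
\[
\sum_{t=1}^T Z_t \le \frac{1}{3}\Big(1+\frac{\bdx^2}{\gamma_T\lambda}\Big)\ln\frac{K}{\delta} + \sqrt{\frac{2\bdx^2}{\lambda}\ln\frac{K}{\delta}\sum_{t=1}^T\frac{1}{\gamma_t}}
\]
simultaneously for all $X$, which is the claimed bound.

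I do not anticipate a serious obstacle here; this lemma is the most routine of the three. The one point requiring a little care is making the boundedness and variance estimates genuinely deterministic (not just in-expectation), since Bennett's inequality needs $Z_t\le b$ almost surely and a variance budget that is $\mathcal{F}_{t-1}$-measurable — both are fine because the bounds from Lemma~\ref{lem:basic} hold pathwise given the choice of $\gamma_t$ and $\norm{\ib_X}\le\bdx$. A secondary subtlety is the union bound: one must take it over all $X\in\Fl$ before plugging in, so that the resulting inequality holds for every $X$ simultaneously, which is what the statement asserts and what is needed in the subsequent telescoping decomposition of the regret.
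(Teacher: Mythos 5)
Your proposal is correct and follows essentially the same route as the paper: Bennett's inequality applied to the martingale difference sequence $\lbh_t^\top\ib_X-\lb_t^\top\ib_X$ for each fixed $X$, with the same pathwise bound $1+\bdx^2/(\gamma_t\lambda)$, the same variance estimate via $\E_t[(\lbh_t^\top\ib_X)^2]\le \ib_X^\top P_t^+\ib_X\le \bdx^2/(\gamma_t\lambda)$, and a union bound over the $K$ super arms. No gaps.
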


	\begin{proof}
		We fix $X\in\Fl$ arbitrarily.
		The proof is obtained by
		using Bennett's inequality for
		the martingale difference sequence $\lbh_t^\top \ib_{X} - \lb_t^\top \ib_{X}$; note that
        $\E_t[\lbh_t^\top \ib_{X} - \lb_t^\top \ib_{X}]=0$ holds by Lemma~\ref{lem:basic}.
		First, the absolute value and variance of
        $\lbh_t^\top \ib_{X} - \lb_t^\top \ib_{X}$
        are bounded as follows:
        \begin{align*}
        &|\lbh_t^\top \ib_{X} - \lb_t^\top \ib_{X}|
        \le 1+\frac{\bdx^2}{\gamma_t\lambda},
        \\
        &\vart{\lbh_t^\top \ib_{X} - \lb_t^\top \ib_{X}}
		\le
		\E_t[(\lbh_t^\top\ib_{X})^2]\le
		\E_t[\ib_X^\top P^+_t \ib_{\Xt} \ib_{\Xt}^\top P_t^+ \ib_{X}]
		\le \ib_X P_t^+ \ib_{X} \le \frac{\bdx^2}{\gamma_t\lambda}.
        \end{align*}
		Hence, by Bennett's inequality,
		we have
		\begin{equation*}
		\sum_{t=1}^T(\lbh_t^\top\ib_{X}-\lb_t^\top\ib_{X})\le
		\frac{1}{3}\bigg(1+\frac{\bdx^2}{\gamma_T \lambda} \bigg)\ln\frac{K}{\delta} +
		\sqrt{\frac{2\bdx^2}{\lambda}\ln\frac{K}{\delta}\sum_{t=1}^T\frac{1}{\gamma_t}  }
		\end{equation*}
		with probability at least $1-\delta/K$.
		Taking the union bound over all super arms $X\in\Fl$,
		we obtain the claim.
	\end{proof}

	Using the above three lemmas,
	we prove Theorem~\ref{thm:highprobdetail} as follows.

	\begin{proof}[Proof of Theorem~\ref{thm:highprobdetail}]
		Note that we have
		$\gamma_t=\frac{t^{-1/3}}{2}$
		and
		$\eta_t=\frac{\lambda}{\bdx^2}\gamma_t=
		\frac{\lambda t^{-1/3}}{2\bdx^2}$.
		By using Lemma~\ref{lem:highprob1}, we have the following with probability at least
		$1-\delta/(K+2)$:
		\begin{align*}
		\sum_{t=1}^T
		(\lbh_t^\top\xbt_t
		-\lbh_t^\top\ib_{X})
		&\le
		\frac{\ln K}{\eta_T}
		+
		(e-2)
		\Bigg(
		d\sum_{t=1}^T \frac{ \eta_t }{1-\gamma_t} +
		\frac{\bdx^2}{\lambda} \sqrt{\frac{1}{2}\ln\frac{K+2}{\delta}\sum_{t=1}^T\frac{\eta_{t}^2}{\gamma_t^2(1-\gamma_t)^2} }
		\Bigg) \\
		&\le
		\frac{2\bdx^2\ln K}{\lambda}T^{1/3}
		+
		(e-2)
		\Bigg(
		\frac{3d\lambda}{4\bdx^2}(T^{2/3}+2T^{1/3}) +
		\sqrt{2T\ln\frac{K+2}{\delta}}
		\Bigg).
		\end{align*}
		We also obtain the following inequality
        with probability at least
		$1-\delta/(K+2)$ by using Lemma~\ref{lem:highprob2}:
		\begin{align*}
		&\sum_{t=1}^T(\lb_t^\top\ib_{\Xt}-\lbh_t^\top\xbt_t)\\
		&\le 2\sum_{t=1}^T\gamma_t +\frac{1}{3}\bigg(2+ \frac{\bdx}{\sqrt{\lambda\gamma_T(1-\gamma_T)}}\bigg)\ln\frac{K+2}{\delta}
		+\sqrt{2\bigg(T+\frac{3\bdx^2}{\lambda}\sum_{t=1}^T\frac{\gamma_t}{(1-\gamma_t)^2}\bigg)\ln\frac{K+2}{\delta} } \\
		&\le \frac{3}{2}T^{2/3} +
		\frac{1}{3}\bigg(2+ \frac{\sqrt{2}\bdx}{\sqrt{\lambda}}(T^{1/6}+T^{-1/6})\bigg)\ln\frac{K+2}{\delta}
		+\sqrt{2\bigg(T+\frac{9\bdx^2}{\lambda}T^{2/3}\bigg)\ln\frac{K+2}{\delta} }.
		\end{align*}
		Furthermore, we have the following inequality
        with probability at least $1-K\delta/(K+2)$ by using Lemma~\ref{lem:highprob3}:
		\begin{align*}
		\sum_{t=1}^T(\lbh_t^\top\ib_{X}-\lb_t^\top\ib_{X})
		&\le
		\frac{1}{3}\bigg(1+\frac{\bdx^2}{\gamma_T \lambda} \bigg)\ln\frac{K+2}{\delta} +
		\sqrt{\frac{2\bdx^2}{\lambda}\ln\frac{K+2}{\delta}\sum_{t=1}^T\frac{1}{\gamma_t}  } \\
		&\le
		\frac{1}{3}\bigg(1+\frac{2\bdx^2}{\lambda}T^{1/3} \bigg)\ln\frac{K+2}{\delta} +
		\frac{\sqrt{3}\bdx}{\sqrt{\lambda}}T^{2/3}\sqrt{\bigg(1+\frac{4}{3T}\bigg)\ln\frac{K+2}{\delta}}.
		\end{align*}
		Summing up both sides of the three inequalities
		and taking the union bound,
		we obtain the theorem.
	\end{proof}

	\section{Proof for the expected regret bound}\label{supp:4}

	We then show the proof of Theorem~\ref{thm:expectation}; the detailed statement is as follows.

	\begin{thm}\label{thm:expectdetail}
		The sequence of super arms
        $\{\Xt\}_{t\in[T]}$ obtained by
        \combandms$(\a=2,\Fl)$ satisfies the following inequality for any $X\in\Fl$:
		\begin{align*}
		\E\Bigg[\sum_{t=1}^T (\lb_t^\top \ib_\Xt - \lb_t^\top \ib_X)\Bigg]
		\le
		\bigg(\frac{2\bdx^2\ln K}{\lambda}
		+
		\frac{(e-2)d\lambda}{\bdx^2}
		+
		2\bigg)\sqrt{T} + o(\sqrt{T})
		.
		\end{align*}
	\end{thm}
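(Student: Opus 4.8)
The plan is to reuse the three-term decomposition of $\sum_{t=1}^T(\lb_t^\top\ib_{\Xt}-\lb_t^\top\ib_X)$ from Section~\ref{supp:3} --- writing it as $\sum_t(\lb_t^\top\ib_{\Xt}-\lbh_t^\top\xbt_t)+\sum_t(\lbh_t^\top\xbt_t-\lbh_t^\top\ib_X)+\sum_t(\lbh_t^\top\ib_X-\lb_t^\top\ib_X)$ for a fixed but arbitrary $X\in\Fl$ --- and to bound the expectation of each piece separately. Because we only need a bound in expectation, the Azuma--Hoeffding and Bennett steps of the high-probability proof get replaced by the tower property of conditional expectation, so the argument is much shorter.

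The third piece vanishes in expectation: Eq.~\eqref{basic4} of Lemma~\ref{lem:basic} gives $\E_t[\lbh_t^\top\ib_X]=\lb_t^\top\ib_X$, hence $\E[\lbh_t^\top\ib_X-\lb_t^\top\ib_X]=0$. For the first piece, set $\zb:=\sum_{X\in\Fl}\mu\ib_X$ and $\xbb_t:=\E_t[\ib_{\Xt}]=(1-\gamma_t)\xbt_t+\gamma_t\zb$ as in the proof of Lemma~\ref{lem:highprob2}; then $\E_t[\lb_t^\top\ib_{\Xt}-\lbh_t^\top\xbt_t]=\lb_t^\top\xbb_t-\lb_t^\top\xbt_t=\gamma_t\lb_t^\top(\zb-\xbt_t)$, which is at most $2\gamma_t$ in absolute value since $\zb$ and $\xbt_t$ are convex combinations of indicator vectors $\ib_X$ with $|\lb_t^\top\ib_X|\le1$, so $\E\big[\sum_t(\lb_t^\top\ib_{\Xt}-\lbh_t^\top\xbt_t)\big]\le2\sum_{t=1}^T\gamma_t$. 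For the second piece, the key point is that inequality~\eqref{ineq:lemtwokey} in the proof of Lemma~\ref{lem:highprob1}, namely
\[
\sum_{t=1}^T(\lbh_t^\top\xbt_t-\lbh_t^\top\ib_X)\le\frac{\ln K}{\eta_T}+(e-2)\sum_{t=1}^T\frac{\eta_t}{1-\gamma_t}\,\ib_{\Xt}^\top P_t^+\ib_{\Xt},
\]
is derived deterministically, before any concentration inequality is applied; taking its expectation and using Eq.~\eqref{basic2}, i.e.\ $\E_t[\ib_{\Xt}^\top P_t^+\ib_{\Xt}]\le d$, gives $\E\big[\sum_t(\lbh_t^\top\xbt_t-\lbh_t^\top\ib_X)\big]\le\frac{\ln K}{\eta_T}+(e-2)d\sum_{t=1}^T\frac{\eta_t}{1-\gamma_t}$.

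Finally I would substitute the $\a=2$ schedule $\gamma_t=\tfrac{1}{2} t^{-1/2}$ and $\eta_t=\tfrac{\lambda}{\bdx^2}\gamma_t=\tfrac{\lambda}{2\bdx^2}t^{-1/2}$ and read off the leading order. With $\sum_{t=1}^T t^{-1/2}\le2\sqrt{T}$ one gets $\frac{\ln K}{\eta_T}=\frac{2\bdx^2\ln K}{\lambda}\sqrt{T}$ and $2\sum_{t=1}^T\gamma_t\le2\sqrt{T}$; for the remaining sum, split $\sum_{t=1}^T\frac{\eta_t}{1-\gamma_t}=\sum_{t=1}^T\eta_t+\sum_{t=1}^T\eta_t\frac{\gamma_t}{1-\gamma_t}$, note $\sum_{t=1}^T\eta_t\le\frac{\lambda}{\bdx^2}\sqrt{T}$, and observe that $\eta_t\gamma_t=\frac{\lambda}{4\bdx^2}t^{-1}$ together with $\frac{1}{1-\gamma_t}\le2$ makes the correction term $O(\ln T)=o(\sqrt{T})$. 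Summing the three bounds yields
\[
\E\Big[\sum_{t=1}^T(\lb_t^\top\ib_{\Xt}-\lb_t^\top\ib_X)\Big]\le\Big(\frac{2\bdx^2\ln K}{\lambda}+\frac{(e-2)d\lambda}{\bdx^2}+2\Big)\sqrt{T}+o(\sqrt{T}),
\]
which is the claim of Theorem~\ref{thm:expectdetail} (recall $K=|\Fl|$). I expect no genuine obstacle here; the only care needed is the bookkeeping of lower-order terms --- in particular verifying that the $\frac{1}{1-\gamma_t}$ factor and the small-$t$ contributions all fall into $o(\sqrt{T})$ rather than inflating the leading constants.
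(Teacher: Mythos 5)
Your proposal is correct and follows essentially the same route as the paper's proof: the same decomposition (with the third term killed in expectation via Eq.~\eqref{basic4}), the same reuse of the deterministic inequality~\eqref{ineq:lemtwokey} combined with $\E_t[\ib_{\Xt}^\top P_t^+\ib_{\Xt}]\le d$ for the second piece, and the same $2\sum_t\gamma_t$ bound for the first piece, followed by the identical substitution of the $\a=2$ schedule. The paper packages these steps as Lemmas~\ref{lem:expect1} and~\ref{lem:expect2}, but the content and the resulting constants are the same as yours.
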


    Let $\xbt_t:=\sum_{X\in\Fl}q_t(X)\ib_{X}$.
	The proof is obtained by
	bounding each term on the right hand side of the following equation for any $X\in\Fl$:
	\begin{align}
	\E\Bigg[\sum_{t=1}^T (\lb_t^\top \ib_\Xt - \lb_t^\top \ib_X)\Bigg]
	&=
	\E\Bigg[\sum_{t=1}^T \E_t[\lb_t^\top \ib_\Xt - \lb_t^\top \ib_X]\Bigg] \label{ineq:expectterms}
	\\
	&=\E\Bigg[\sum_{t=1}^T \E_t[\lb_t^\top \ib_\Xt-\lbh_t^\top \xbt_t]\Bigg]
	+\E\Bigg[ \sum_{t=1}^T \E_t[\lbh_t^\top \xbt_t-\lbh_t^\top \ib_X]\Bigg], \nonumber
	\end{align}
	where the second equality comes
	from Lemma~\ref{lem:basic}.
	To bound these terms,
	we prove the following two lemmas.


	\begin{lem}\label{lem:expect1}
		For any $X\in\Fl$, we have
		\begin{align*}
		\E\Bigg[\sum_{t=1}^T\E_t[\lbh_t^\top\xbt_t
		-\lbh_t^\top\ib_{X}]\Bigg]
		\le
		\frac{\ln K}{\eta_T}
		+
		(e-2)d\sum_{t=1}^T
		\frac{\eta_t}{1-\gamma_t}.
		\end{align*}
	\end{lem}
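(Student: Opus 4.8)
The plan is to recycle the potential-function (online-mirror-descent) computation already carried out in the proof of Lemma~\ref{lem:highprob1}, stopping at the point where concentration was invoked. That computation tracks the progress quantity $\ln\!\bigl(W_{t+1}^{\eta_t^{-1}}/W_t^{\eta_{t-1}^{-1}}\bigr)$; it uses H\"older's inequality to absorb the time-varying learning rate, the elementary bound $e^{-x}\le 1-x+(e-2)x^2$ for $|x|\le1$ (applicable because $\eta_t=\frac{\lambda}{\bdx^2}\gamma_t$ forces $\eta_t|\lbh_t^\top\ib_X|\le1$ via Lemma~\ref{lem:basic}), the inequality $\ln(1+x)\le x$, and the estimate $\sum_{X\in\Fl}q_t(X)(\lbh_t^\top\ib_X)^2\le \ib_{\Xt}^\top P_t^+\ib_{\Xt}/(1-\gamma_t)$, and it culminates in inequality~\eqref{ineq:lemtwokey},
\[
\sum_{t=1}^T\bigl(\lbh_t^\top\xbt_t-\lbh_t^\top\ib_X\bigr)\le\frac{\ln K}{\eta_T}+(e-2)\sum_{t=1}^T\eta_t\,\frac{\ib_{\Xt}^\top P_t^+\ib_{\Xt}}{1-\gamma_t}.
\]
Since this bound holds surely for every fixed $X\in\Fl$ (it invokes no concentration argument), I would simply quote it rather than re-deriving it.

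The second step is to take full expectations of both sides of the displayed inequality. On the left, the tower rule with $\E_t[\cdot]=\E[\cdot\mid\Xsub{1:t-1},\lsub{1:t}]$ gives $\E[\lbh_t^\top\xbt_t-\lbh_t^\top\ib_X]=\E[\E_t[\lbh_t^\top\xbt_t-\lbh_t^\top\ib_X]]$, so the left-hand side becomes exactly $\E\bigl[\sum_{t=1}^T\E_t[\lbh_t^\top\xbt_t-\lbh_t^\top\ib_X]\bigr]$, which is the quantity to be bounded.

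For the right-hand side I would note that $\gamma_t$ and $\eta_t$ are deterministic and that $P_t$ is a function of $\wbt_t$, hence measurable with respect to $\Xsub{1:t-1},\lsub{1:t-1}$; therefore Eq.~\eqref{basic2} of Lemma~\ref{lem:basic} applies and yields $\E_t[\ib_{\Xt}^\top P_t^+\ib_{\Xt}]\le d$. Consequently
\[
\E\Bigl[\sum_{t=1}^T\eta_t\,\frac{\ib_{\Xt}^\top P_t^+\ib_{\Xt}}{1-\gamma_t}\Bigr]=\sum_{t=1}^T\frac{\eta_t}{1-\gamma_t}\,\E\bigl[\E_t[\ib_{\Xt}^\top P_t^+\ib_{\Xt}]\bigr]\le\sum_{t=1}^T\frac{\eta_t\,d}{1-\gamma_t},
\]
and since $\E[\ln K/\eta_T]=\ln K/\eta_T$, combining the two pieces gives the claimed inequality.

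I do not expect a genuine obstacle: because the target is an \emph{expectation}, the Azuma--Hoeffding step appearing in Lemma~\ref{lem:highprob1} — used there to concentrate $\ib_{\Xt}^\top P_t^+\ib_{\Xt}$ around its conditional mean $d$ — is here replaced by a single application of linearity of expectation together with Eq.~\eqref{basic2}, which is precisely why the $\sqrt{\ln(1/\delta)}$ term vanishes. The only point that deserves care is the conditioning bookkeeping needed to invoke $\E_t[\ib_{\Xt}^\top P_t^+\ib_{\Xt}]\le d$, namely checking that $P_t$ is determined by the data available at the start of round $t$; this is immediate from the update rule defining $\wbt_t$ in Algorithm~\ref{alg:combandalt}.
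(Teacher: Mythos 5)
Your proposal is correct and matches the paper's own proof: both quote the pointwise inequality~\eqref{ineq:lemtwokey} from the proof of Lemma~\ref{lem:highprob1} (which requires no concentration), take expectations, identify the left-hand side via the tower rule, and bound the right-hand side using $\E_t[\ib_{\Xt}^\top P_t^+ \ib_{\Xt}]\le d$ from Lemma~\ref{lem:basic}. Your remark that the expectation step is exactly what replaces the Azuma--Hoeffding argument is the right way to see why the $\sqrt{\ln(1/\delta)}$ term disappears.
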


	\begin{proof}
		As in the proof of Lemma~\ref{lem:highprob1}, 
        we have Eq.~\eqref{ineq:lemtwokey}; 
		\begin{align*}
		\sum_{t=1}^T
		(\lbh_t^\top\xbt_t
		- \lbh_t^\top\ib_{X})
		\le
		\frac{\ln K}{\eta_T}
		+
		(e-2)
		\sum_{t=1}^T \eta_t\frac{\ib_{\Xt}^\top P_t^+ \ib_{\Xt}}{1-\gamma_t}.
		\end{align*}
		Taking the expectation of both sides, we obtain
		\begin{align*}
		\E\Bigg[\sum_{t=1}^T
		\E_t[\lbh_t^\top\xbt_t
		-\lbh_t^\top\ib_{X}]\Bigg]
		&\le
		\frac{\ln K}{\eta_T}
		+
		(e-2)
		\E\Bigg[\sum_{t=1}^T
		\frac{\eta_t}{1-\gamma_t}\E_t[\ib_{\Xt}^\top P_t^+ \ib_{\Xt}]\Bigg]\\
		&\le
		\frac{\ln K}{\eta_T}
		+
		(e-2)d\sum_{t=1}^T
		\frac{\eta_t}{1-\gamma_t},
		\end{align*}
		where the second inequality is obtained by Lemma~\ref{lem:basic}.
	\end{proof}


	\begin{lem}\label{lem:expect2}
		The following inequality holds:
		\[
		\E\Bigg[\sum_{t=1}^T\E_t[\lb_t^\top\ib_{\Xt}-\lbh_t^\top\xbt_t]\Bigg]
		\le 2\sum_{t=1}^T\gamma_t.
		\]
	\end{lem}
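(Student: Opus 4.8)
The plan is to evaluate the conditional expectation $\E_t[\lb_t^\top\ib_{\Xt}-\lbh_t^\top\xbt_t]$ in closed form, show it is at most $2\gamma_t$ for every realization of the history, and then sum over $t$ and take the outer expectation.

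First I would observe that $\xbt_t=\sum_{X\in\Fl}q_t(X)\ib_X$ is measurable with respect to the history $\Xsub{1:t-1},\csub{1:t-1}$ (it is built from $\wbt_t$), so it is a constant under $\E_t[\cdot]$; moreover, being a convex combination of the vectors $\{\ib_X\}_{X\in\Fl}$, it lies in their span, which---since $p_t(X)\ge\gamma_t/K>0$ for every $X$---is exactly the range of $P_t$. Hence $P_tP_t^+\xbt_t=\xbt_t$ by the same argument as Eq.~\eqref{basic3}, and repeating the one-line computation behind Eq.~\eqref{basic4} with $\ib_X$ replaced by $\xbt_t$ gives
\[
\E_t[\lbh_t^\top\xbt_t]=\lb_t^\top\E_t[\ib_{\Xt}\ib_{\Xt}^\top]P_t^+\xbt_t=\lb_t^\top P_tP_t^+\xbt_t=\lb_t^\top\xbt_t.
\]

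Next I would use the mixture form $p_t(X)=(1-\gamma_t)q_t(X)+\gamma_t u(X)$ from Eq.~\eqref{def:dists} together with $u(X)=1/K$, which yields $\E_t[\ib_{\Xt}]=\sum_{X\in\Fl}p_t(X)\ib_X=(1-\gamma_t)\xbt_t+\gamma_t\zb$, where $\zb:=\sum_{X\in\Fl}\ib_X/K$. Combining this with the previous display,
\[
\E_t[\lb_t^\top\ib_{\Xt}-\lbh_t^\top\xbt_t]
=\lb_t^\top\big((1-\gamma_t)\xbt_t+\gamma_t\zb\big)-\lb_t^\top\xbt_t
=\gamma_t\,\lb_t^\top(\zb-\xbt_t).
\]
Since $\zb$ and $\xbt_t$ are both convex combinations of the $\ib_X$ and $|\lb_t^\top\ib_X|\le1$ by assumption, we get $\lb_t^\top(\zb-\xbt_t)\le 2$, hence $\E_t[\lb_t^\top\ib_{\Xt}-\lbh_t^\top\xbt_t]\le 2\gamma_t$. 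Summing this deterministic bound over $t\in[T]$ and taking the outer expectation (which leaves the deterministic quantities $\gamma_t$ unchanged) gives the claim.

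I do not expect a genuine obstacle here; the statement is essentially a one-step variance-free calculation. The only point requiring a moment's care is justifying that the unbiasedness identity Eq.~\eqref{basic4}, which is stated for a single indicator vector, extends by linearity to $\xbt_t$---and this follows immediately once one notes that $\xbt_t$ lies in the range of $P_t$.
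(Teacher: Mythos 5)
Your proposal is correct and follows essentially the same route as the paper: both reduce the summand to $\gamma_t\,\lb_t^\top(\zb-\xbt_t)$ (the paper writes this as $\sum_X\gamma_t\mu\,\lb_t^\top\ib_X-\sum_X\gamma_t q_t(X)\,\lb_t^\top\ib_X$) and bound it by $2\gamma_t$ using $|\lb_t^\top\ib_X|\le1$. The only cosmetic difference is that you justify $\E_t[\lbh_t^\top\xbt_t]=\lb_t^\top\xbt_t$ via the range of $P_t$, whereas the paper obtains it by linearity from Eq.~\eqref{basic4}; both are valid.
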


	\begin{proof}
		Since $|\lb_t^\top\ib_X|\le 1$ holds for any $X\in\Fl$, we have
		\begin{align*}
		\E_t[\lb_t^\top\ib_{\Xt}-\lbh_t^\top\xbt_t]
		&=\E_t[\lb_t^\top\ib_{\Xt}]-\lb_t^\top\xbt_t
		=\sum_{X\in\Fl} p_t(X) \lb_t^\top\ib_{X} - \sum_{X\in\Fl} q_t(X) \lb_t^\top\ib_{X} \\
		&=\sum_{X\in\Fl} \gamma_t\mu \lb_t^\top\ib_{X}
		-
		\sum_{X\in\Fl} \gamma_tq_t(X) \lb_t^\top\ib_{X}
		\le 2\gamma_t.
		\end{align*}
		Summing up both sides for $t=1,\dots,T$ and taking the expectation, we obtain the claim.
	\end{proof}

    We now prove Theorem~\ref{thm:expectdetail} as
    follows.
	\begin{proof}[Proof of Theorem~\ref{thm:expectdetail}]
		Recall that we have
		$\gamma_t=\frac{t^{-1/2}}{2}$ and
		$\eta_t=\frac{\lambda}{\bdx^2}\gamma_t=
		\frac{\lambda t^{-1/2}}{2\bdx^2}$.
		The proof is readily obtained by Eq.~\eqref{ineq:expectterms}, Lemma~\ref{lem:expect1} and Lemma~\ref{lem:expect2}
		as follows:
		\begin{align*}
		\E\Bigg[\sum_{t=1}^T (\lb_t^\top \ib_\Xt - \lb_t^\top \ib_X)\Bigg]
		&=\E\Bigg[\sum_{t=1}^T \E_t[\lb_t^\top \ib_\Xt-\lbh_t^\top \xbt_t]\Bigg]
		+ \E\Bigg[\sum_{t=1}^T \E_t[\lbh_t^\top \xbt_t-\lbh_t^\top \ib_X]\Bigg] \\
		&\le
		\frac{\ln K}{\eta_T}
		+
		(e-2)d\sum_{t=1}^T
		\frac{\eta_t}{1-\gamma_t}
		+
		2\sum_{t=1}^T\gamma_t\\
		&\le
		\frac{2\bdx^2\ln K}{\lambda}\sqrt{T}
		+
		\frac{(e-2)d\lambda}{\bdx^2}\bigg(\sqrt{T}+\frac{1}{2}\ln(2\sqrt{T}-1)\bigg)
		+
		2\sqrt{T}-1.
		\end{align*}
	\end{proof}

\end{document}